%% file: main.tex
\documentclass[10pt, 
aps,
prx,
twocolumn,
noeprint,
longbibliography,
nofootinbib
]{revtex4-2}

\input{preamble.tex}
\graphicspath{{figures/}}
\usepackage{color}

\begin{document}

\title{Rigorous error bounds for dissipative thermal state preparation from weak system-bath coupling}

\author{Christopher Ong}
\author{S. A. Parameswaran}
\author{Benedikt Placke}
\author{Dominik Hahn}
\affiliation{Rudolf Peierls Centre for Theoretical Physics, University of Oxford, Oxford OX1 3PU, United Kingdom}

\begin{abstract}
Thermal state preparation is a central challenge in the simulation of quantum many-body systems. Yet, provably efficient
algorithms for this task were only introduced recently [Chen et al. {\it Nature} {\bf 646}, 561 (2025)]. These algorithms are based on dissipative Lindbladian evolution which exactly fixes the thermal state.  
Controlled and efficient digital simulation of this evolution, although possible in principle, remains out of reach for present-day quantum hardware.
Subsequent work has therefore focused on analog approximations of the proposed Lindbladians via `collision models' with  relatively modest requirements --- a resettable bath of  ancilla qubits whose couplings to the system can be tuned in time-dependent fashion --- while still admitting rigorous fixed-point error bounds. Existing rigorous approaches, however, do not exploit the fact that these constructions generically
implement not only the desired Lindblad dynamics, but also an additional unitary evolution generated by the system Hamiltonian which may aid convergence to the thermal state [Lloyd and Abanin arXiv:2506.21318 (2025)]. Here, we show that this unitary contribution does indeed tighten the fixed-point error bound 
and demonstrate that it is rigorously controlled by the system-bath coupling strength $J$, scaling as $J^2$. 
This demonstrates that the effect of the spurious `Lamb shift' term generated by the system-bath interaction can be controlled by tuning $J$. 
We clarify the role, previously observed, of a randomized implementation in suppressing possible resonances of the drive with the many-body spectrum, and bound the additional variance that this randomization imposes on observables. 
Finally, we numerically study aspects of the protocol which are relevant for its practical realization, such as the mixing time.
\end{abstract}

\maketitle

\section{Introduction}

Thermal state preparation is a central task in the simulation of quantum many-body systems, as it is essential for probing their equilibrium properties. Yet, despite its importance, controlled preparation of the thermal state of a generic local Hamiltonian remains difficult. In the classical setting, Markov chain Monte Carlo methods such as the Metropolis algorithm have been highly successful \cite{metropolis1953,levin2017markov}. In the quantum setting, however, many existing approaches---such as imaginary-time evolution \cite{motta2020}, variational methods \cite{consiglio2023,deshpande2025, ilin2025}, and phase-estimation-based algorithms \cite{poulin2009}---continue to face significant challenges involving some subset of their lack of scalability, absence of rigorous convergence guarantees, and inconvenience of implementation on current quantum devices. 

Another promising approach for thermal state preparation is provided by dissipative algorithms~\cite{PhysRevA.61.022301,Measurmenetsteering2020,zhang2023dissipativequantumgibbssampling,Lin2025Dissipative}. The central idea is to engineer an open-system dynamics that drives the system toward the desired state as the fixed point of the evolution.
Recently, a family of dissipative algorithms has been discovered~\cite{chen2025efficient,CKG23,ding2025efficient}, based on the efficient simulation of a family of Lindbladians that satisfy the Kubo-Martin-Schwinger~(KMS) detailed balance conditions~\cite{agarwal1973,alicki1976detailed,frigerio1977quantum,fagnola2007generators}. These conditions are  a quantum generalization of classical detailed balance, and they imply that the exact fixed point of the Lindbladian is given by the Gibbs state $\rho_\beta = Z_\beta^{-1} e^{-\beta H}$, where $Z_\beta =  \text{Tr}\, e^{-\beta H}$ is the partition function.
These algorithms can therefore be interpreted as a quantum generalization of Markov chain Monte Carlo algorithms~\cite{gilyen2026quantumgeneralizationsglaubermetropolis}.
Although Lindbladians satisfying KMS detailed balance have long been known in the form of `Davies generators'~\cite{davies1974,davies1976}, a key new feature of this recently discovered family is that it is generated by quasi-local jump operators. This not only implies their resource-efficient simulability at least in principle, but also makes some aspects of their theoretical analysis much more tractable. For example, it allows one to derive estimates on the mixing time, i.e. the convergence time to equilibrium, in certain regimes, such as at high temperatures~\cite{rouze2024efficient,rouze2024optimal,bakshi2025dobrushinconditionquantummarkov}, for weak interactions~\cite{tong2024poly_mixing,smid2025poly_mixing}, in one-dimensional systems~\cite{bergamaschi2026fastmixingquantumspin}, and on expander graphs~\cite{placke2024slow_mixing,rakovszky2024bottleneck,garmarnik2024slow_mixing}.

Despite these major advances, implementing these exact Gibbs samplers remains challenging in practice. Their jump operators are defined in terms of weighted Heisenberg evolutions of local operators and are therefore not straightforward to realize directly. The first theoretical proposals for their implementation~\cite{CKG23,ding2025efficient} were based on a block-encoding of the Lindbladian, which is beyond the reach of present-day quantum devices and analog platforms. Other implementation strategies have been proposed, including randomized compilation~\cite{Chen_2025}, variational compilation of the jump operators~\cite{hahn2025efficientquantumgibbssampling}, and approaches that leverage additional properties of the system, such as assumptions motivated by the eigenstate thermalization hypothesis~\cite{brunner2024lindblad_engineering,shtanko2023preparingthermalstatesnoiseless}.

A more physically motivated approach is to implement these Gibbs samplers \emph{approximately} using `collision models' of a type familiar from the open quantum systems literature.  In this setting, the system is weakly coupled to bath degrees of freedom represented by ancilla qubits, initially in some unentangled (product) state. After each interaction period, the ancillas are `reset', i.e. reinitialized in the product state.
This repeated reset enables the effective removal of entropy from the system. For a specific choice of time-dependent system-bath dynamics and in a suitably defined weak-coupling limit~[cf. Fig.~\ref{fig:protocol}], such models can be designed to approximately realize a Gibbs sampler~\cite{hahn2026efficientquantumthermalstate,ding2025endtoendefficientquantumthermal,lloyd2506quantum,ScandiThermalization}. For specific choices of parameters, this approximation can even extend beyond the weak-coupling regime~\cite{wang2026lindbladdynamicsrigorousguarantees}.

Although this class of protocols is appealing as it can be implemented using only local interactions, its rigorous analysis is challenging. At leading order in the system-bath coupling strength, the effective dynamics consists of a Lindbladian evolution, followed by a unitary evolution with the system Hamiltonian. 
While the dissipative part of the Lindbladian evolution can be chosen to match that of an exact Gibbs sampler, their coherent parts differ due to an effective back-action of the system-bath dynamics, often referred to as the `Lamb shift'. 
As a consequence, the thermal state is no longer an exact fixed point of the Lindbladian evolution. 
Previous analyses~\cite{hahn2025efficientquantumgibbssampling,ding2025endtoendefficientquantumthermal} studied the fixed-point error resulting from this Lamb shift, and derived bounds in terms of the mixing time and structural properties of the Lindbladian. However, by neglecting the effect of the unitary system evolution (in effect, by `rewinding' this via backward time-evolution \cite{hahn2026efficientquantumthermalstate}), parts of their bounds were insensitive to the system-bath coupling strength.

Ref.~\onlinecite{lloyd2506quantum} identified another key mechanism which may control the fixed-point error in the weak-coupling limit.
In particular, their analysis suggests that including the unitary evolution under the system Hamiltonian in the error analysis can significantly reduce the fixed-point error, which can be controlled and made arbitrarily small by tuning the system-bath coupling strength. They further introduce a randomization of the evolution time to suppress many-body resonances excited by the periodic system-bath evolution and thus remove possible divergences in the error bound.
While this picture is supported by perturbative arguments, our main contribution is to establish it rigorously and sharpen the resulting analysis and error bounds.
Concretely, we prove that the fixed-point error of the randomized protocol scales quadratically in the system-bath coupling strength, and hence can be made arbitrarily small by tuning this coupling strength. This implies an overall resource scaling of order $\mathcal{O}(\epsilon^{-1})$ for preparing the fixed point to accuracy $\epsilon$, which is better than existing protocols that control the fixed-point error by other means.

\begin{figure}
    \centering
    \includegraphics[width=\linewidth]{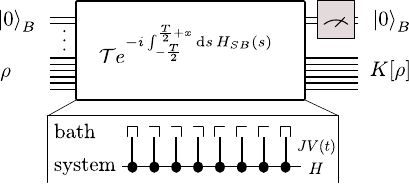}
    \caption{Setup of the channel $\mathcal{K}[\rho]$ to prepare a thermal state. At each step, a set of ancillas is initialized in the product state $\ket{0}_B$. The system and ancilla qubits are evolved in an interval $[-\frac{T}{2},\frac{T}{2}+x]$ with the Hamiltonian $H_{SB}=H\otimes \mathbb{I}_B+ J V(t)$, with the interaction $V(t)$ defined in Eq.~\eqref{eq:interaction}. The final time $\frac{T}{2}+x$ is drawn from a distribution $p(x)$~[cf. Eq.~\eqref{eq:px}] to suppress resonances of the system-bath interaction. The ancilla qubits are reset to the initial state $\ket{0}_B$ at the end of the cycle.
    The fixed point $\rho_\mathrm{fix}$ of the channel $\mathcal{K}[\rho]$ is approximately the thermal state $\rho_\beta=Z_\beta^{-1} e^{-\beta H}$, with a rigorous error bound scaling as $J^2$.}
    \label{fig:protocol}
\end{figure}

Our rigorous bounds are obtained for the fictitious channel that is the average over various realizations. 
Practical realizations of the protocol would however be stochastic: in each time step, one evolves for an evolution time sampled at random from a specified distribution. While this procedure is `self-averaging', in that the expectation value over such sequences reproduces the desired averaged channel, the randomization introduces additional variance into measurements of observables, thereby increasing the number of samples required to estimate expectation values. To our knowledge, this additional sampling variance has not been previously noted or studied for quantum Gibbs samplers. Our second main result is to quantify the additional overhead incurred by the randomization step and demonstrate that it does not severely impact the efficiency of the protocol.  Specifically, we derive bounds on the extra variance generated by repeated applications of the randomized protocol.
We further support this analysis with perturbative arguments showing that omitting the randomization step worsens the fixed-point error, even in the presence of only sparse resonances.

Our results provide a comprehensive and rigorous analysis of several practically relevant properties of the thermal state preparation protocol, and in particular show how the coupling strength and other parameters can be tuned to achieve a prescribed accuracy. Taken together, these results establish the algorithm firmly as a promising candidate for implementation on near-term digital and analog quantum platforms.

The remainder of this work is organized as follows. In Section~\ref{sec:Gibbs samplers}, we review the aspects of quantum Gibbs samplers that are most relevant for the present work, before introducing the protocol in Section~\ref{sec:protocol}. In Section~\ref{sec:methods}, we highlight several properties of the channel that are used in subsequent derivations of error bounds. In Section~\ref{sec:Results}, we state our main analytical results concerning the fixed-point error and the variance of observable measurements. In Section~\ref{sec:Numerical results}, we present a numerical analysis of the scaling of the spectral gap of the channel with the coupling strength, the role of the randomization step, and the variance of observables. Finally, Section~\ref{ref:Discussion} contains a discussion of the implications of our results, as well as several open questions.

\section{Exact Gibbs samplers}\label{sec:Gibbs samplers}

We wish to prepare the state $\rho_\beta = Z_\beta^{-1} e^{-\beta H}$, where $H$ is a given local Hamiltonian and $Z_\beta = \Tr e^{-\beta H}$ is the partition function. This can be done by means of constructing a Lindbladian $\mathcal L$ which fixes this state as its unique steady state $\mathcal L[\rho_\beta] = 0$.

While the general construction of such a `quantum Gibbs sampler' for a long time remained an open problem, following the breakthrough of Ref.~\onlinecite{chen2025efficient} there now exist a number of different constructions in the literature~\cite{ding2025efficient,bakshi2025dobrushinconditionquantummarkov,chen2025efficient,CKG23,gilyen2026quantumgeneralizationsglaubermetropolis}. 
Here, we use the construction of Ref.~\onlinecite{ding2025efficient}. In the following, we focus on aspects of the construction which are important for our purposes. For a full exposition of KMS detailed balance and quantum Gibbs samplers, we refer the reader for example to Refs.~\onlinecite{hahn2026efficientquantumthermalstate,ding2025efficient,Guo2025designingopen}.

Consider the following family of Lindbladians:
\begin{align}\label{eq:Gibbs}
    \mathcal{L}_\mathrm{DB}[\rho]=-i[G_{\rm{DB}},\rho]+\sum_{a\in\mathcal{A}} L_a \rho L^\dagger_a-\frac{1}{2}\{L_a^\dagger L_a,\rho\},
\end{align}
with 
\begin{align}\label{eq:jump}
L_a=\int_{-\infty}^\infty \rd t f(t)e^{iHt}A_a e^{-iHt}=\int_{-\infty}^\infty \rd t f(t)A_a(t).
\end{align}
Here, $A_a(t)$ denotes the Heisenberg evolution $A_a(t)=e^{iHt}A_a e^{-iHt}$ of an operator $A_a$, 
and the filter function $f(t)$ is chosen such that it satisfies a particular symmetry 
relation 
\begin{align}\label{eq:ftdefinition}
f(t)=\frac{1}{2\pi}\int \rd \nu \,q(\nu) e^{\frac{\beta \nu}{4}}\e^{i\nu t}, \text{ with } q(-\nu)=q^*(\nu).
\end{align}
Using the Bohr frequency representation,
\begin{align}
    A_\nu=\sum_{\substack{\omega_1, \omega_2 \in {\rm Spec}(H) \\\omega_1-\omega_2=\nu}}\ket{\omega_1}\braket{\omega_1|A|\omega_2}\bra{\omega_2},
\end{align}
where $\mathrm{Spec}(H)$ denotes the set of eigenvalues of $H$, the coherent part can then be expressed as
\begin{align}\label{eq:coherent}
    G_{\rm{DB}}=\frac{i}{2}\sum_{a\in\mathcal A} \sum_{\nu\in\mathfrak{B}} \tanh(\frac{\beta \nu}{4}) (L^\dagger_a L_a)_\nu,
\end{align}
with $\mathfrak{B}$ denoting the set of Bohr frequencies of the Hamiltonian $\mathfrak{B} = \{ \omega_1 - \omega_2: \omega_1, \omega_2 \in \mathrm{Spec}(H)\}$.

It can be shown~\cite{ding2025efficient} that the jump operators in Eq.~\eqref{eq:jump} as well as the coherent part in Eq.~\eqref{eq:coherent} are quasi-local if the filter function $f(t)$ decays faster than a power law. 

The coherent part and jump operators are carefully constructed such that the Lindbladian in Eq.~\eqref{eq:Gibbs} fulfills KMS detailed balance~\cite{CKG23,agarwal1973,alicki1976detailed,frigerio1977quantum,fagnola2007generators}.
Given the superoperator 
\begin{align}
    \Gamma[\cdot]=\rho_\beta^{1/2}(\cdot)\rho_\beta^{1/2},
\end{align}
this relation can be expressed as
\begin{align}\label{eq:KMS detailed balance}
    \mathcal{L}^\dagger_\mathrm{DB}=\Gamma^{-1}\circ \mathcal{L}_\mathrm{DB} \circ \Gamma.
\end{align}
Here, the adjoint operator is defined with respect to the Frobenius inner product
$\Tr[\mathcal{L}[\rho]\mathcal{O}]=\Tr[\rho\mathcal{L}^\dagger[\mathcal{O}]]$. Trace preservation under Lindbladian evolution implies $e^{t\mathcal{L}^\dagger}[\mathbb{I}]=\mathbb{I}$ and therefore $\mathcal{L}^\dagger[\mathbb{I}]=0$.

Importantly, KMS detailed balance [Eq.~\eqref{eq:KMS detailed balance}], directly implies that $\rho_\beta$ is a fixed point of the dynamics under $\mathcal{L}_\mathrm{DB}$:
\begin{align}
    \mathcal{L}_\mathrm{DB}[\rho_\beta]=\rho_\beta^{1/2}\mathcal{L}^\dagger_\mathrm{DB}[\rho_\beta^{-1/2}\circ \rho_\beta\circ \rho_\beta^{-1/2}]\rho_\beta^{1/2}=0.
\end{align}

\section{Setup}\label{sec:protocol}

In this section, we introduce the setup of the thermal state preparation algorithm studied in this work. As shown in Fig.~\ref{fig:protocol}, the protocol is defined on a system of \(n_S\) qubits coupled to a bath of \(n_B\) ancilla qubits. One step of the protocol consists of initialization of the bath qubits to a product state $\rho_B^0 = \ketbra{0}{0}_B$, followed by evolution with a time-dependent, joint system-bath Hamiltonian $H_{SB}(t)$, and then reset of the bath qubits to $\rho_B^0$.
Note that in principle, $n_B=1$ is expected to be sufficient for generic Hamiltonians~\cite{lloyd2024quasiparticle}; however, the convergence time of the algorithm is expected to improve using $n_B=\mathcal{O}(n_S)$ ancilla qubits [as discussed below, we will aim to approximate a given exact Gibbs sampler, in which case we need one ancilla per jump operator $L_a$ in Eq.~\eqref{eq:Gibbs}].

The system-bath Hamiltonian used during the protocol is given by
\begin{align}\label{eq:HSB}
    H_{SB}(t) = H \otimes {\mathbb I}_B + {\mathbb I}_S \otimes H_B + J V(t).
\end{align}
Here, $H$ is the system Hamiltonian, $H_B$ is the bath Hamiltonian, and the term $J V(t)$ is the interaction between system and bath, where $J$ denotes the strength of this coupling. 

The interaction term $V(t)$ takes the form
\begin{align}\label{eq:interaction}
    V(t) = \sum_{a\in\mathcal A} f(t) A_a \otimes B_a^\dagger + f^*(t) A_a^\dagger \otimes B_a.
\end{align}
The set of jump operators $\mathcal{A}$ can be chosen arbitrarily, as long as they are closed under Hermitian conjugation ($A_a\in \mathcal{A}\Rightarrow A^\dagger_a\in \mathcal{A}$) and guarantee irreducibility of the Lindbladian evolution. (A convenient sufficient choice independent of the Hamiltonian $H$ would be, for instance, the single-qubit Pauli operators $X$ and $Y$ on each site.)

Each jump operator is coupled to a distinct bath qubit, where
$B=\frac{1}{2}(X_B-iY_B)$ and $B^\dagger=\frac{1}{2}(X_B+iY_B)$ are the lowering and raising operators of the corresponding bath qubit, respectively.
The filter function $f(t)$ is chosen such that it fulfills the relation defined in Eq.~\eqref{eq:ftdefinition} for exact Gibbs samplers.
For the rest of this work, for concreteness, we use 
\begin{align}\label{eq:f}
    f(t)=\sqrt{\frac{2}{\pi \sigma^2}} e^{-\frac{2}{\sigma^2}(t-\frac{i \beta}{4})^2}.
\end{align}
However, we expect our results to hold more generically, as long as $f(t)$ fulfills Eq.~\eqref{eq:ftdefinition} and decays sufficiently fast. 

The dynamics of the bath is chosen to be trivial $H_B={\mathbb I}_B$. Although we do not use it here, an equivalent construction can instead introduce a bath Hamiltonian and choose the interaction amplitude $f(t)$ to be real, which may be easier to implement in practice~\cite{hahn2026efficientquantumthermalstate,lloyd2506quantum}.

We define the following propagator $U(t)$
\begin{align}\label{eq:propagatorcomp}
    U\left(t+\frac{T}{2}\right)=\mathcal{T}e^{-i \int_{-\frac{T}{2}}^t \rd s \, H_{SB}(s)},
\end{align}
where $\mathcal T$ denotes time ordering, and $T$ is an interaction timescale fixed as a parameter of the algorithm.
At each step of the algorithm, the ancilla qubits are initialized in the product state $\rho_B^0=\ket{0}\bra{0}_B$ at time ${t=-\frac{T}{2}}$.  The combined system is then evolved under Eq.~\eqref{eq:HSB} until time $t = \frac{T}{2}+x$, where the shift $x$ is drawn from a Gaussian distribution with randomization timescale $T_0$
\begin{align}\label{eq:px}
 p(x)=\frac{1}{\sqrt{\pi T_0^2}} e^{-\left(\frac{x}{T_0}\right)^2}.
\end{align}
At the end of the cycle, the ancilla qubits are reset back to $\rho_B^0=\ket{0}\bra{0}_B$. 
For a given $x$, one step of the protocol thus applies a channel $\mathcal{K}_x[\rho]$ on the system, which takes the form
\begin{align}\label{eq:single instance}
     \mathcal{K}_x[\rho]=\Tr_B \left[U\left(T+x \right)(\rho \otimes \rho_B^0) U^\dagger\left(T+x \right)\right].
\end{align}
 
We define $\mathcal{K}[\rho]$ as the average of $\mathcal{K}_x[\rho]$ over the distribution of evolution time shifts $x$,
\begin{align}\label{eq:channelrandomized}
    \mathcal{K}[\rho]=\int_{-\infty}^{\infty} \rd x\, p(x) \mathcal{K}_x[\rho].
\end{align}

As was suggested in Ref.~\onlinecite{lloyd2506quantum}, the averaging over different evolution times acts effectively as a `dephaser' that removes possible resonances of the drive with possible transitions in the system's many-body spectrum and eliminates potential divergences in the fixed-point error expansion that plague the deterministic limit of the protocol. This point will be discussed more carefully in Sec.~\ref {subsec:Effectresonances}.

As mentioned in the introduction, although in practice one implements the randomized protocol, to obtain rigorous bounds it is more convenient to  take a two-step approach.  First, we invoke the self-averaging property of the randomized evolution to focus on the fixed-point error of the average channel $\mathcal{K}$. However, the real randomized protocol incurs errors in observables relative to $\mathcal{K}$, controlled by the variance of the time-step distribution; our second step is to bound this error. Taken together, these serve to rigorously bound the error of the true randomized protocol.

The convergence of the channel toward the fixed point is characterized by its mixing time $\tau_{\mathcal{K},\mathrm{mix}}(\epsilon)$. It is defined as the minimum number of iterations of the channel $\mathcal{K}[\rho]$ required so that any initial state is guaranteed to be within $\epsilon$ of its fixed point $\rho_\mathrm{fix}$.
\begin{equation}\label{eq:def_tau_mix}
    \tau_{\mathcal{K}, \mathrm{mix}}(\epsilon) = \min \left\{ \, M \in \mathbb{N} \,\middle|\, \sup_{\rho} \left\| \mathcal{K}^{M}[\rho] - \rho_\mathrm{fix} \right\|_{1} \le \epsilon \right\}. 
\end{equation}
Here, the distance to the fixed point is quantified using the trace norm,
\begin{align}
    \norm{\rho}_1=\Tr[\sqrt{\rho\rho^\dagger}].
\end{align}

\section{Methods and Heuristic Overview}\label{sec:methods}

In this section, we provide some results on the properties of the channel $\mathcal{K}[\rho]$ that are necessary for bounding the fixed-point error. We first show, in Sec.~\ref{subsec:ClosenessLindblad}, that the channel is well-approximated by a Lindbladian evolution followed by a unitary evolution with the system Hamiltonian. Next, we construct an approximate fixed point of this combined evolution in Sec.~\ref{subsec:Approx} and show that its distance to $\rho_\beta$ is controlled by the coupling strength $J$. 
In Sec.~\ref{subsec:Heuristicpicture}, we provide a heuristic explanation for the role of the unitary evolution in controlling the distance of the fixed point to the thermal state.
Finally, in Sec.~\ref{subsec:Effectresonances}, we heuristically study the effect of resonances between the driving period of the system-bath coupling and transitions in the many-body spectrum of $H$, which demonstrates the necessity of the randomized evolution time. 

\subsection{Approximation of the channel}\label{subsec:ClosenessLindblad}

The average channel $\mathcal{K}[\rho]$ almost implements a Lindblad evolution under a Gibbs sampling algorithm, followed by unitary dynamics generated by $H$. Concretely, consider the Lindbladian
\begin{align}\label{eq:LLSdefinition}
    \mathcal L_\mathrm{LS}[\rho]=-i[\Delta G,\rho]+\mathcal L_\mathrm{DB}[\rho],
\end{align}
with
\begin{align}
    \Delta G=G_{\mathrm{LS}}-G_{\mathrm{DB}}.
\end{align}
The Lamb shift term $G_{\mathrm{LS}}$ is given by
\begin{multline} \label{eq:GLS}
    G_{\mathrm{LS}} = -\frac{1}{2i} \sum_{a\in\mathcal A} \int_{-\infty}^{\infty} \rd t_1 \int_{-\infty}^{\infty} \rd t_2 \, \text{sign}(t_1 - t_2)\\ f^*(t_2) f(t_1) A^\dagger_{a}(t_2) A_{a}(t_1).
\end{multline}
This Lindbladian agrees with the exact Gibbs sampler in Eq.~\eqref{eq:Gibbs}, up to a different coherent term given by $G_{\mathrm{LS}}$.
Furthermore, denote unitary evolution by the system alone by $U_0(t) = e^{-iHt}$ and consider the channel
\begin{align}\label{eq:channelinfty}
    \mathcal{K}_{\mathrm{LS}}[\rho]= \int_{-\infty}^{\infty} \rd x\, p(x) U_0(T+x)\,\e^{J^2 \mathcal L_\mathrm{LS}}\rho \, U_0^\dagger(T+x).
\end{align}
The difference between $\mathcal{K}_{\mathrm{LS}}[\rho]$ and the average channel $\mathcal{K}[\rho]$ is then bounded by 
\begin{multline}\label{eq:channel}
    \norm{\mathcal{K}-\mathcal{K}_{\mathrm{LS}}}_{1\rightarrow 1} \\= \mathcal{O} \left( J^2 n_B e^{\frac{\beta^2}{4\sigma^2}} e^{-\frac{T^2}{2\sigma^2 + 4 T_0^2}} + J^4 n_B^2 e^{\frac{\beta^2}{2\sigma^2}} \right),
\end{multline}
where the difference between channels is characterized by the induced norm
\begin{align}
    \norm{\mathcal{M}}_{1\rightarrow 1}=\sup_{\rho}\frac{\norm{\mathcal{M}[\rho]}_1}{\norm{\rho}_1}.
\end{align}

The result follows directly from a Dyson series expansion and is derived in App.~\ref{app:boundchanneldist}. 
The first error term originates from truncating Gaussian tails of the filter function due to the finite-time system-bath evolution in each iteration step~[cf. Fig.~\ref{fig:protocol}]. This term is exponentially suppressed in the length of the time evolution $T$.
The second term bounds contributions from higher-order terms of the Dyson series expansion. 

From the definition of Eq.~\eqref{eq:channelinfty}, we see that \(J^2\) plays the role of a time step of the evolution with an effective Lindbladian $\mathcal L_{\rm LS}$. Since the effective Lindbladian is the only dissipative part of the protocol, this suggests (although does not show) that the mixing time of the channel $\mathcal K_{\rm LS}$, $\tau_{\mathrm{mix}}(\epsilon) \equiv \tau_{\mathcal{K}_{\mathrm{LS}}, \mathrm{mix}}(\epsilon)$, scales as the inverse of this step size $J^{-2}$.
It is then natural to introduce a rescaled mixing time ${t_{\mathrm{mix},J}(\epsilon)}$ of the channel
\begin{equation}\label{eq:scsaling}
    t_{\mathrm{mix},J}(\epsilon) = J^2 \tau_{\mathrm{mix}}(\epsilon).
\end{equation}

Although we expect that $t_{\mathrm{mix},J}$ approaches a constant for $J\rightarrow 0$, and we provide numerical evidence for this in Sec.~\ref {sec:Numerical results}, we stress that we do not have a general proof that $\lim_{J\rightarrow 0 }t_{\mathrm{mix},J}(\epsilon)$ exists.

There exist some cases where the existence of this limit can be rigorously established, when $2 \norm{\rho_\beta^{1/4}\Delta G \rho_\beta^{-1/4}}_\infty$ is smaller than the spectral gap of the Gibbs sampler $\mathcal{L}_\mathrm{DB}$~\cite{ding2025endtoendefficientquantumthermal}, but establishing this limit more generally remains an interesting direction for future work.  
In Sec.~\ref {sec:Numerical results}, we provide numerical evidence suggesting that this scaling persists even when these existing rigorous results do not apply.

Finally, we note that if $T$ is chosen sufficiently large and $J$ sufficiently small so that
\begin{equation}
    \tau_{\mathrm{mix}}\!\left(\frac{\epsilon}{2}\right)\,
    \norm{\mathcal{K}-\mathcal{K}_{\mathrm{LS}}}_{1\to1}
    \leq \frac{\epsilon}{2},
\end{equation}
then the mixing times $\tau_{\mathcal{K}, \mathrm{mix}}$ of $\mathcal{K}[\rho]$ and $\tau_{\mathrm{mix}}$ of $\mathcal{K}_{\mathrm{LS}}[\rho]$ are related by~[App. D in Ref.~\onlinecite{ding2025endtoendefficientquantumthermal} and Lemma~\ref{lem:closemixingtimes}, App.~\ref{app:fixedpointsclose}]
\begin{align}
    \tau_{\mathcal{K}, \mathrm{mix}}(2\epsilon)
    \leq
    \tau_{\mathrm{mix}}\!\left(\frac{\epsilon}{2}\right).
\end{align}

\subsection{Approximate fixed point of $\mathcal{K}_{\mathrm{LS}}[\rho]$}\label{subsec:Approx}

In this subsection, we show that the channel $\mathcal{K}_{\mathrm{LS}}[\rho]$ has an approximate fixed point $\tilde{\rho}$ with the following properties:
\begin{enumerate}
    \item $\quad \mathcal{K}_{\mathrm{LS}}[\tilde \rho] = \tilde \rho + \mathcal{O}(J^4)$
    \item $\quad \norm{\tilde{\rho}-\rho_\beta}_1 = \mathcal{O}(J^2)$   
\end{enumerate}

This fact will be crucial to obtain our rigorous fixed-point error bound in Sec.~\ref{subsec:results}. 

An ansatz for the approximate fixed point $\tilde \rho$ can be motivated by degenerate perturbation theory as done in Ref.~\onlinecite{lloyd2506quantum}. While degenerate perturbation theory is not a controlled order-by-order expansion, the first-order expression is sufficient for our purposes.
We outline the main steps here.

The strategy is to match the relation $\mathcal{K}_{\mathrm{LS}}[\tilde{\rho}]=\tilde{\rho}$ [cf. Eq.~\eqref{eq:channelinfty}] with an ansatz $\tilde{\rho} = \rho^{(0)} + J^2\sigma$ up to order 
$J^2$.
This is done in the eigenbasis of the Hamiltonian~($H\ket{\phi_a}=E_a\ket{\phi_a}$, $A_{ab}=\braket{\phi_a|A|\phi_b}$, $\omega_{ab}=E_a-E_b$).

Equating leading-order terms of order $J^0$ yields
\begin{align}\label{eq:perturbativeJ0}
    e^{-\frac{1}{4}\omega_{ab}^2T_0^2-i\omega_{ab}T} \rho^{(0)}_{ab}=\rho^{(0)}_{ab},
\end{align}
which implies that ${\rho}^{(0)}$ is diagonal in the eigenbasis of $H$. 
To determine ${\rho}^{(0)}$ explicitly, it is necessary to consider the $J^2$ contributions for the diagonal elements.
These give
\begin{align}
(\mathcal L_\mathrm{LS}[\rho^{(0)}])_{aa}=0.
\end{align}
It follows that $\rho^{(0)}=\rho_\beta$, since $([\Delta G,\rho^{(0)}])_{aa}=0$. This shows that this ansatz satisfies 
\begin{equation}\norm{\tilde{\rho}-\rho_\beta}_1 = \mathcal{O}(J^2)\label{eq:ansatzerrorboundJ}
\end{equation}
as long as $\norm{\sigma}_1$ is bounded. The relation for the off-diagonal elements $a\neq b$ at order $J^2$ is given by
\begin{align}\label{eq:perturbativeJ2}
   e^{-i\omega_{ab}T}\left(-i[\Delta G,\rho_\beta]_{ab}+e^{-\frac{1}{4}\omega_{ab}^2T_0^2} \sigma_{ab}\right)=\sigma_{ab}.
\end{align}
Solving for $\sigma_{ab}$ yields,
\begin{align}\label{eq:sigmaexpression}
    \sigma_{ab}=\frac{-i e^{-i\omega_{ab}T}[\Delta G,\rho_\beta]_{ab}}{1-e^{-\frac{1}{4}\omega_{ab}^2 T_0^2-i\omega_{ab}T}}.
\end{align}

We can now substitute the ansatz $\tilde{\rho}=\rho_\beta+J^2\sigma$ as defined above back into $\mathcal{K}_{\mathrm{LS}}$, and rigorously show that (1) 
$\mathcal{K}_{\mathrm{LS}}[\tilde{\rho}] =\tilde{\rho} + \mathcal{O}(J^4)$, and (2) its distance from $\rho_\beta$ is bounded by [App. \ref{app:boundG}] 
\begin{align}\label{eq:boundapprox}
    \norm{\tilde{\rho} - \rho_\beta}_1=J^2\norm{\sigma}_1 \leq J^2 \mathrm{poly}\left(n_B,\frac{\beta}{\sigma},\frac{T}{\sigma},\frac{T}{T_0}\right)e^{\frac{9\beta^2}{4\sigma^2}}.
\end{align}

\subsection{Heuristic picture}\label{subsec:Heuristicpicture}

The explicit form of $\sigma$ in Eq.~\eqref{eq:sigmaexpression} as well as the bound in Eq.~\eqref{eq:boundapprox} suggests that the presence of the additional propagator $U_0(T+x)$ in the channel $\mathcal{K}_{\mathrm{LS}}[\rho]$ is crucial to ensure  $\norm{\tilde{\rho}-\rho_{\beta}}_1=\mathcal{O}(J^2)$.
To see this more concretely it is useful to separate the action of the channel into two distinct steps in line with our discussion above, namely `fixing the diagonal algebra' [via Eq.~\eqref{eq:perturbativeJ0}] and `selecting the Gibbs weights' [via Eq.~\eqref{eq:perturbativeJ2}] within it. When the unitary step is eliminated (as achieved, e.g. by the `rewinding' step  in Ref.~\onlinecite{hahn2026efficientquantumthermalstate})  the channel at $\mathcal{O}(J^0)$ is just the identity so {\it every} possible operator is a fixed point. Then, at $\mathcal{O}(J^2)$ the whole Lindbladian correction has to choose the fixed point, and so we have to simply diagonalize the $\mathcal{O}(J^2)$ problem: evidently, there can be no sense in which this is a perturbative procedure in $J^2$. 

Contrast this to the case when the unitary step is implemented: here the zeroth order map is instead coherent evolution with $H$; ignoring resonances for the moment, the fixed operators of this are, as we have argued, diagonal in the eigenbasis of $H$. 
So already at $\mathcal O(J^0)$ the unitary part removes the off-diagonal degeneracy of the fixed point, and then within the set of diagonal states, there is a clear difference between the action of the different terms at $\mathcal O(J^2)$. The detailed-balance piece $\mathcal{L}_{\rm DB}$ fixes the weights within the diagonal sector to agree with $\rho_\beta$, and this happens for \emph{any} finite $J$, even as $J \to 0^+$. The coherent perturbation (the difference between the Lamb shift and the `correct' coherent part) is off-diagonal, and, since the off-diagonal elements of $\tilde \rho$ are fixed to zero at $\order{J^0}$, it can only induce coherences perturbatively in $J^2$.

Finally, Eq.~\eqref{eq:sigmaexpression} highlights the role of the probabilistic choice of the final time $t = \frac{T}{2}+x$. In the deterministic case $T_0=0$, the denominator can vanish whenever $\omega_{ab}T=2 \pi k$ with $k$ an integer. This characterizes many-body resonances induced by the time-dependent system-bath interaction. The randomized 
evolution time eliminates them by dephasing.

We see, therefore, that the additional unitary evolution plays a crucial role in privileging the commutant of the algebra generated by $H$ from within the vastly bigger full operator Hilbert space already at $\mathcal O(J^0)$, enabling a controlled treatment in $J^2$, and in turn necessitates randomized evolution to eliminate resonances in this procedure.

\subsection{Effect of many-body resonances}~\label{subsec:Effectresonances}

As discussed in the previous subsection, resonances cause divergences in Eq.~\eqref{eq:sigmaexpression} if $T_0 = 0$ and $\omega_{ab}T=2 \pi k$ with $k$ an integer. It is natural to ask whether these resonances remain isolated, affecting only the (few) matrix elements directly corresponding to the resonant transitions. We show here that they do not: resonances can also modify populations in non-resonant subspaces. 

To illustrate this, we construct an approximate fixed point $\rho_0$ of the channel $\mathcal{K}_0[\rho]$ corresponding to the deterministic choice $T_0=0$. To this end, we follow the same arguments that led to our ansatz for the approximate fixed point of $\mathcal{K}_{\mathrm{LS}}[\rho]$ in Sec.~\ref{subsec:Approx}.

As before, the channel $\mathcal{K}_0[\rho]$ can be approximated by a Lindblad evolution and a unitary evolution,
\begin{equation}
    \mathcal{K}_0[\rho]= e^{-iHT} \rho e^{iHT} + J^2 e^{-iHT} \mathcal{L}_{\mathrm{LS}}[\rho] e^{iHT} + \mathcal{O}(J^4).
\end{equation}
We again match the relation 
$\mathcal{K}_0[\rho_0]=\rho_0$ with the ansatz 
$\rho_0 = \rho^{(0)} + J^2 \rho^{(1)}$. At order $J^0$, this gives
\begin{equation}\label{eq:perturbativeJ0degen}
    e^{-iHT} \rho^{(0)} e^{iHT} = \rho^{(0)}.
\end{equation}
Note that already in this order there is a difference between the resonant and non-resonant case. At a resonance where $\omega_{ab}T=2 \pi k$, Eq.~\eqref{eq:perturbativeJ0degen} is satisfied even for non-diagonal $\rho^{(0)}$.

In order to analyze the matrix elements in the eigenbasis of $H$, we split it into a resonant subspace $\{a,b\}$ (where $e^{i \omega_{ab} T} = 1$) and a non-resonant subspace $\{i,j\}$ (where $e^{i \omega_{ij} T}\neq1$). Eq.~\eqref{eq:perturbativeJ0degen} then implies that $\rho^{(0)}$ has zero elements,
\begin{equation}
    (\rho^{(0)})_{ai} = 0,\ 
    (\rho^{(0)})_{ij} = 0,
\end{equation}
and nonzero elements,
\begin{equation}
    (\rho^{(0)})_{aa} \neq 0, \ 
    (\rho^{(0)})_{ii} \neq 0, \ 
    (\rho^{(0)})_{ab} \neq 0. 
\end{equation}
Notably, off-diagonal elements $(\rho^{(0)})_{ab}$ in the resonant subspace are now in general allowed to be nonzero. To fully determine $\rho^{(0)}$ we have to consider the order $J^2$ contributions in this subspace.
At second order, it is given by
\begin{equation} \label{eq:perturbativeJ2degen}
    e^{-iHT}\big(\mathcal{L}_{\mathrm{LS}}[\rho^{(0)}]+\rho^{(1)}\big)e^{iHT} = \rho^{(1)}
\end{equation}
which implies that $\mathcal{L}_{\mathrm{LS}}[\rho^{(0)}]$ has zero elements,
\begin{align}
\begin{split}
(\mathcal{L}_{\mathrm{LS}}[\rho^{(0)}])_{aa} &= 0,\\ 
(\mathcal{L}_{\mathrm{LS}}[\rho^{(0)}])_{ii} &= 0,\\
(\mathcal{L}_{\mathrm{LS}}[\rho^{(0)}])_{ab} &= 0,
\end{split}
\end{align}
and nonzero elements,
\begin{align}
(\mathcal{L}_{\mathrm{LS}}[\rho^{(0)}])_{ai} \neq 0,\ 
(\mathcal{L}_{\mathrm{LS}}[\rho^{(0)}])_{ij} \neq 0.
\end{align}

The structures of $\rho^{(0)}$ and $\mathcal{L}_{\mathrm{LS}}[\rho^{(0)}]$ are summarized below for the case of a 4-level system, where the box indicates the resonant space $\{a,b\}$, and matrix elements are represented by `0' if zero or `$\times$' if generally nonzero.
\begin{equation} \label{eq:degeneratePTresult}
\rho^{(0)} \sim \begin{pmatrix} \boxed{\begin{matrix} \times & \times\\ \times & \times \end{matrix}} & \begin{matrix} 0 & 0\\ 0 & 0 \end{matrix} \\[1em] \begin{matrix} 0 & 0\\ 0 & 0 \end{matrix} & \begin{matrix} \times & 0\\ 0 & \times \end{matrix} \end{pmatrix}, \ 
\mathcal{L}_{\mathrm{LS}}[\rho^{(0)}] \sim \begin{pmatrix} \boxed{\begin{matrix} 0 & 0\\ 0 & 0 \end{matrix}} & \begin{matrix} \times & \times\\ \times & \times \end{matrix} \\[1em] \begin{matrix} \times & \times\\ \times & \times \end{matrix} & \begin{matrix} 0 & \times\\ \times & 0 \end{matrix} \end{pmatrix}
\end{equation}

The state $\rho^{(0)}$ is then determined implicitly as the solution of Eq.~\eqref{eq:degeneratePTresult}.
Already its structure is different from the non-resonant case, and in particular $\rho^{0}$ is not fixed to be diagonal. Furthermore, 
in Sec.~\ref{subsec:Dephasing}, we obtain the solution to Eq.~\eqref{eq:degeneratePTresult} numerically for a small system. The results show that, at resonance, the matrix elements of the approximate fixed point $\rho_0$ differ from those of $\rho_\beta$ even in the limit $J\to 0^+$, and that this discrepancy persists even within non-resonant blocks.
Thus, the effects of resonances are not even confined to resonant blocks, but also extend to non-resonant populations. A finite value of $T_0$ is therefore essential for suppressing these resonance-induced corrections.

\section{Rigorous Results}\label{sec:Results}

In this section, we present analytical results characterizing the performance of the algorithm. In Sec.~\ref{subsec:results}, we use the approximation of the channel $\mathcal{K}[\rho]$ by $\mathcal{K}_{\mathrm{LS}}[\rho]$ and the existence of an approximate fixed point close to $\rho_\beta$ to obtain a fixed-point error bound. Furthermore, we provide upper bounds for the variance of observables in Sec.~\ref{subsec:variance}.

\subsection{Fixed-point error bound}\label{subsec:results}

Equipped with the observation that the channel $\mathcal{K}[\rho]$ effectively implements a Lindbladian evolution followed by a unitary evolution and that this combined evolution has an approximate fixed point close to $\rho_\beta$, it is possible to derive the main result of this work: a rigorous bound on the fixed-point error controlled by $J^2$.
\begin{theorem}\label{thrm:fixedpointerror}
    Let $\rho_{\rm{fix}}$ denote the fixed point of the channel $\mathcal{K}[\rho]$. Denote the parameters of $\mathcal{K}$ as the width of the filter function $\sigma$ [cf. Eq.~\eqref{eq:f}], the randomization timescale $T_0$ [cf. Eq.~\eqref{eq:px}], the total evolution time $T$, and the inverse temperature $\beta$. If $\sigma<T$ and $T_0<T$, then
    \begin{align}
    \begin{split}
    &\norm{\rho_{\beta}-\rho_{\rm{fix}}}_1 \leq 2\epsilon + t_{\mathrm{mix},J}(\epsilon) n_B e^{\frac{\beta^2}{4\sigma^2}} e^{-\frac{T^2}{2\sigma^2 + 4 T_0^2}}
    \\ &\qquad + J^2 t_{\mathrm{mix},J}(\epsilon)\,\mathrm{poly}\left(n_B,\frac{\beta}{\sigma},\frac{T}{\sigma},\frac{T}{T_0}\right)e^{\frac{5\beta^2}{2\sigma^2}}
    \end{split}
    \end{align}
\end{theorem}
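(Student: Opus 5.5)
Our plan is the standard "approximate fixed point plus mixing" argument, run on $\mathcal{K}_{\mathrm{LS}}$ and then transferred to $\mathcal{K}$. Let $\rho_{\mathrm{LS}}$ be the fixed point of $\mathcal{K}_{\mathrm{LS}}$ and let $\tilde\rho=\rho_\beta+J^2\sigma$ be the approximate fixed point of Sec.~\ref{subsec:Approx}. We split the error as
\begin{align*}
\norm{\rho_\beta-\rho_{\rm fix}}_1 \le \norm{\rho_\beta-\tilde\rho}_1+\norm{\tilde\rho-\mathcal{K}^M[\tilde\rho]}_1+\norm{\mathcal{K}^M[\tilde\rho]-\rho_{\rm fix}}_1 ,
\end{align*}
with $M=\tau_{\mathcal{K},\mathrm{mix}}(\epsilon)$, or alternatively $M=\tau_{\mathrm{mix}}(\epsilon)$ together with a comparison to $\mathcal{K}_{\mathrm{LS}}$. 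The first term is $\mathcal{O}(J^2)$ by Eq.~\eqref{eq:boundapprox}. The last term is at most $\epsilon$ by mixing, after normalizing $\tilde\rho$: since $\tilde\rho$ is Hermitian and has trace one, it is a difference of positive parts, so we use linearity and pay at most a factor $\norm{\tilde\rho}_1\le 1+\mathcal{O}(J^2)$, which the $2\epsilon$ slack absorbs.

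The middle term is handled by telescoping. I would take $M=\tau_{\mathrm{mix}}(\epsilon)$ for $\mathcal{K}_{\mathrm{LS}}$ and write
\begin{align*}
\mathcal{K}^M[\tilde\rho]-\tilde\rho=\sum_{k=0}^{M-1}\mathcal{K}^{k}\bigl(\mathcal{K}[\tilde\rho]-\tilde\rho\bigr).
\end{align*}
Since $\mathcal{K}$ is a channel and hence contractive in trace norm, this gives $\norm{\mathcal{K}^M[\tilde\rho]-\tilde\rho}_1\le M\bigl(\norm{\mathcal{K}-\mathcal{K}_{\mathrm{LS}}}_{1\to1}\norm{\tilde\rho}_1+\norm{\mathcal{K}_{\mathrm{LS}}[\tilde\rho]-\tilde\rho}_1\bigr)$. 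We then insert Eq.~\eqref{eq:channel} for the first piece and the $\mathcal{O}(J^4)$ residual for the second. Writing $M=J^{-2}t_{\mathrm{mix},J}(\epsilon)$, the pieces scale as follows:
\begin{itemize}
\item the tail term $J^2 n_B e^{\beta^2/4\sigma^2}e^{-T^2/(2\sigma^2+4T_0^2)}$ becomes the second term of the theorem;
\item the Dyson term $J^4n_B^2e^{\beta^2/2\sigma^2}$ becomes $J^2 t_{\mathrm{mix},J}\,n_B^2e^{\beta^2/2\sigma^2}$;
\item the residual $\mathcal{O}(J^4)$ becomes $J^2t_{\mathrm{mix},J}$ times its prefactor.
\end{itemize}
For the last term we must keep the mixing time of $\mathcal{K}$ consistent with that of $\mathcal{K}_{\mathrm{LS}}$. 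The cleanest route is to apply the mixing property to $\mathcal{K}_{\mathrm{LS}}$ instead: replace $\mathcal{K}^M$ by $\mathcal{K}_{\mathrm{LS}}^M$ in the decomposition, bound $\norm{\rho_{\rm fix}-\rho_{\mathrm{LS}}}_1$ via the fixed-point perturbation bound $\tau_{\mathrm{mix}}(\epsilon)\norm{\mathcal{K}-\mathcal{K}_{\mathrm{LS}}}_{1\to1}+\epsilon$ (the content of Lemma~\ref{lem:closemixingtimes}), and bound $\norm{\tilde\rho-\rho_{\mathrm{LS}}}_1\le \epsilon+M\norm{\mathcal{K}_{\mathrm{LS}}[\tilde\rho]-\tilde\rho}_1$. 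This produces exactly the $2\epsilon$ appearing in the statement.

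The main obstacle is making the $\mathcal{O}(J^4)$ residual $\norm{\mathcal{K}_{\mathrm{LS}}[\tilde\rho]-\tilde\rho}_1$ explicit, with a prefactor $\mathrm{poly}(n_B,\beta/\sigma,T/\sigma,T/T_0)e^{5\beta^2/2\sigma^2}$. First expand $e^{J^2\mathcal{L}_{\mathrm{LS}}}=1+J^2\mathcal{L}_{\mathrm{LS}}+R$ with $\norm{R}_{1\to1}\le \tfrac{1}{2}J^4\norm{\mathcal{L}_{\mathrm{LS}}}_{1\to1}^2$, and note $\norm{\mathcal{L}_{\mathrm{LS}}}_{1\to1}\lesssim n_B e^{\beta^2/4\sigma^2}\mathrm{poly}$ because $\norm{f}_{L^1}\le e^{\beta^2/8\sigma^2}$. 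Then check that the $J^0$ and $J^2$ terms cancel exactly. The $J^0$ terms cancel because the dephased unitary average fixes diagonal $\rho_\beta$. The diagonal $J^2$ terms cancel because $\mathcal{L}_{\mathrm{DB}}[\rho_\beta]=0$ and $[\Delta G,\rho_\beta]$ is off-diagonal. The off-diagonal $J^2$ terms cancel by the choice Eq.~\eqref{eq:sigmaexpression}, except for the diagonal part of $\sigma$, which I would set to zero.

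What remains at order $J^4$ is $J^4\mathcal{L}_{\mathrm{LS}}[\sigma]$, averaged over $x$, together with $R[\tilde\rho]$. Its size is at most $J^4(\norm{\mathcal{L}_{\mathrm{LS}}}\norm{\sigma}_1+\norm{\mathcal{L}_{\mathrm{LS}}}^2)$. Combining $\norm{\sigma}_1\lesssim\mathrm{poly}\,e^{9\beta^2/4\sigma^2}$ with $\norm{\mathcal{L}_{\mathrm{LS}}}\lesssim e^{\beta^2/4\sigma^2}$ gives $e^{10\beta^2/4\sigma^2}=e^{5\beta^2/2\sigma^2}$, which matches the theorem.

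The delicate input is therefore the bound on $\norm{\sigma}_1$ itself. We take it from App.~\ref{app:boundG}, which controls the resonant denominator $|1-e^{-\omega^2T_0^2/4-i\omega T}|^{-1}$ using $T_0<T$ and the smoothness of $[\Delta G,\rho_\beta]$ in frequency, and we treat it as given. Finally, the hypotheses $\sigma<T$ and $T_0<T$ are used only to simplify the polynomial prefactors.
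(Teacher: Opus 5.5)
Your final route is exactly the paper's proof. You pass through the fixed point $\rho_{\mathrm{LS}}$ of $\mathcal{K}_{\mathrm{LS}}$, apply the fixed-point perturbation bound once to $\mathcal{K}$ versus $\mathcal{K}_{\mathrm{LS}}$ and once to $\tilde\rho$ versus $\mathcal{K}_{\mathrm{LS}}$ (producing the $2\epsilon$), and bound the residual $\norm{\mathcal{K}_{\mathrm{LS}}[\tilde\rho]-\tilde\rho}_1$ by $J^4\bigl(\norm{\mathcal{L}_{\mathrm{LS}}}_{1\to1}\norm{\sigma}_1+\norm{\mathcal{L}_{\mathrm{LS}}}_{1\to1}^2\bigr)$ with $\norm{\sigma}_1$ taken from App.~\ref{app:boundG}. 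One correction and one remark: the perturbation bound you invoke is Lemma~\ref{lem:Fixedpoints} (Eq.~\eqref{theorem:mixingtime2}), not Lemma~\ref{lem:closemixingtimes}, and your treatment of the possible non-positivity of $\tilde\rho$ via its positive and negative parts addresses a point the paper passes over silently.
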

The full expression for the fixed-point error (including the explicit form of the ${\rm poly}(\cdots)$ term) is given in Eq.~\eqref{eq:totalbound}. The second term is a tail error, originating from the channel approximation in Eq.~\eqref{eq:channel}. This term is exponentially suppressed as a function of $T/\sigma$ and can therefore be made small by (mildly) increasing the evolution time between resets.
The third term of this error bound scales quadratically in $J$, and can thus be controlled by tuning the coupling strength $J$.

To obtain this bound, the strategy is to use the approximate fixed point of the channel $\mathcal{K}_{\mathrm{LS}}$, constructed in Sec.~\ref{subsec:Approx}. Recall that $\mathcal{K}_{\mathrm{LS}}[\tilde \rho]= 
\tilde{\rho} + \mathcal{O}(J^4)$ and also that $\Tilde \rho$ is close to the thermal state.
Since, as discussed in Sec.~\ref{subsec:ClosenessLindblad}, $\mathcal K_\mathrm{LS}$ is close to the effective channel $\mathcal K$ implemented by our protocol, a fixed-point bound on $\norm{\rho_\beta-\rho_{\rm{fix}}}_1$ then follows from a triangle inequality and closeness of fixed points of close channels. In particular (see Lemma~\ref{lem:Fixedpoints}, App.~\ref{app:fixedpointsclose} for details),
\begin{align}
\begin{split}
\label{eq:boundmain}
&\norm{\rho_{\mathrm{fix}}-\rho_{\beta}}_1
\leq \norm{\rho_{\mathrm{fix}}-\tilde{\rho}}_1 + \norm{\tilde{\rho}-\rho_{\beta}}_1 \\
&\leq 2\epsilon
+ \frac{t_{\mathrm{mix}, J}(\epsilon)}{J^2} \left(\norm{\mathcal{K}-\mathcal{K}_{\mathrm{LS}}}_{1\rightarrow 1}
+ \norm{\mathcal{K}_{\mathrm{LS}}[\tilde{\rho}]-\tilde{\rho}}_1\right)
\\&\phantom{\leq}+\norm{\tilde{\rho}-\rho_{\beta}}_1 .
\end{split}
\end{align}
Rigorous bounds for $\norm{\mathcal{K}_{\mathrm{LS}}[\tilde{\rho}]-\tilde{\rho}}_1$ and $\norm{\mathcal{K}-\mathcal{K}_{\mathrm{LS}}}_{1\rightarrow 1}$ are obtained in App.~\ref{app:boundsremaining}.
Substituting these above yields Theorem~\ref{thrm:fixedpointerror}.

Note that the inequalities in Eq.~\eqref{eq:boundmain} hold for any density matrix
\(\tilde{\rho}\). The key insight of choosing the approximate fixed point
\(\tilde{\rho}=\rho_\beta+J^2\sigma\) is that in this case all terms are controlled by a factor of \(J^2\). Indeed,
$\norm{\tilde{\rho}-\rho_\beta}_1
= J^2 \norm{\sigma}_1$ (for a $\sigma$ with bounded norm as $J\to0$) and $\frac{t_{\mathrm{mix},J}(\epsilon)}{J^2}
\norm{\mathcal{K}_{\mathrm{LS}}[\tilde{\rho}]-\tilde{\rho}}_1
= \mathcal{O}\!\left(J^2 t_{\mathrm{mix},J}(\epsilon)\right)$.

Provided that $t_{\mathrm{mix},J}(\epsilon)=o(J^{-2})$ (or, equivalently, $J^2\,t_{\mathrm{mix},J}(\epsilon)\to 0$ as  $J\to 0$), this result shows that the fixed-point error of the protocol can be made arbitrarily small by decreasing the coupling strength. Note that $t_{\mathrm{mix},J}$ here is already the rescaled mixing time; see Eq.~\eqref{eq:scsaling} and the surrounding discussion.
Consequently, to achieve the target accuracy $\epsilon$, one expects the scaling 
\begin{align}
J^2=\Tilde{\mathcal{O}}\left(\frac{\epsilon}{t_{\mathrm{mix},J}}\right).
\end{align}
Here, we have suppressed the dependence on the other protocol parameters and $\Tilde{\mathcal{O}}$ suppresses logarithmic dependencies on its argument.
The total Hamiltonian evolution time required to prepare an approximation to the thermal state is then given by 
\begin{align}
T_{\rm{tot}}=\Tilde{\mathcal{O}}\left(\frac{T t_{\mathrm{mix},J}^2}{\epsilon}\right).
\end{align}

Note that this bound qualitatively improves upon previous results obtained in Refs.~\onlinecite{hahn2026efficientquantumthermalstate,Lin2025Dissipative}, which estimated the closeness of the fixed point of $\mathcal{L}_{\mathrm{LS}}$ to $\rho_\beta$.

\subsection{Bounds on the variance}\label{subsec:variance}
In practice, the averaged channel $\mathcal{K}[\rho]$ is not implemented directly. Rather, at each step a random evolution time $x_i$ is sampled, giving rise to the channel $\mathcal{K}_{x_i}[\rho]$. After $M$ steps, this defines the random channel
\begin{align}
    \mathcal{E}_M[\rho]=\mathcal{K}_{x_M}\circ \mathcal{K}_{x_{M-1}}\circ \cdots \circ \mathcal{K}_{x_1}[\rho].
\end{align}
Expectation values of observables are thus obtained by averaging over repeated realizations of the random channels $\mathcal{E}_M$, which reproduces the effect of $M$ applications of the same averaged channel $\mathcal{K}[\rho]$,
\begin{align}
\begin{split}
{\mathbb E}_M \mathcal{E}_M[\rho]
&= {\mathbb E}_M \mathcal{K}_{x_M}\circ \mathcal{K}_{x_{M-1}}\circ \cdots \circ \mathcal{K}_{x_1}[\rho] \\
&= \mathcal{K}^M[\rho].
\end{split}
\end{align}

However, the randomness induces an additional source of variance. To make this explicit, we introduce the following notation. 
For an arbitrary initial state $\rho$, denote the expectation value of $O$ with respect to this random sequence as $\braket{O}_M=\Tr[O \mathcal{E}_M[\rho]]$.
The variance of the observable $O$ can then be expressed as
\begin{align}
\begin{split}
    \mathrm{Var}(O)&=
    \mathbb{E}_M\left[\braket{O^2}_M -\braket{O}_M^2\right]\\&+
    \left[\mathbb{E}_M\braket{O}_M^2 -\left( \mathbb{E}_M \braket{O}_M \right)^2
    \right].
\end{split}
\end{align}
The first term is the usual shot-noise contribution, averaged over sequences, while the second term is the (sequence-to-sequence) variance of the mean $\expval{O}_M$. While the first term remains finite even in the deterministic limit $T_0 \to 0$, the second contribution is an additional contribution to the variance arising from the randomness in the channel.

The following theorem provides an upper bound on the variance of observables under repeated application of the randomized channel.

\begin{theorem}\label{thrm:variance}
Let $\mathcal{E}_M$ denote the random channel obtained after $M$ applications of the protocol. Then, for any observable $O$, the variance satisfies
\begin{align}
\begin{split}
    \mathrm{Var}(O)&\leq  \norm{O}^2_\infty+ \\&C^2 T_0^2\frac{4}{\pi}\norm{H}^2_\infty \norm{O}^2_\infty \frac{1-e^{-\frac {2 M}{\tau_{\rm{mix}}}}}{1-e^{-\frac{2}{\tau_{\rm{mix}}}}}
\end{split}
\end{align}
\end{theorem}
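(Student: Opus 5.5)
The variance splits into the two pieces displayed just before the statement, and we bound them separately. The shot-noise term is immediate: for every realization, $\braket{O^2}_M-\braket{O}_M^2\le \braket{O^2}_M\le\norm{O}_\infty^2$, since $\mathcal{E}_M[\rho]$ is a density matrix. This gives the first term $\norm{O}^2_\infty$. The rest of the argument controls the sequence-to-sequence variance $\mathbb{E}_M\braket{O}_M^2-(\mathbb{E}_M\braket{O}_M)^2$.

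For this term we use a martingale (Doob) decomposition in the Heisenberg picture. Define the Doob martingale
\[
Y_k=\mathbb{E}\bigl[\braket{O}_M \,\big|\, x_1,\dots,x_k\bigr]=\Tr\bigl[(\mathcal{K}^\dagger)^{M-k}[O]\,\mathcal{K}_{x_k}\circ\cdots\circ\mathcal{K}_{x_1}[\rho]\bigr].
\]
Then $Y_0=\mathbb{E}_M\braket{O}_M$ and $Y_M=\braket{O}_M$. Because martingale increments are orthogonal,
\[
\mathrm{Var}(\braket{O}_M)=\sum_{k=1}^M\mathbb{E}\,(Y_k-Y_{k-1})^2 .
\]
Write $\rho_{k-1}=\mathcal{K}_{x_{k-1}}\circ\cdots\circ\mathcal{K}_{x_1}[\rho]$ and $O_{k}=(\mathcal{K}^\dagger)^{M-k}[O]$. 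The $k$-th increment is
\[
Y_k-Y_{k-1}=\Tr\bigl[O_k\,(\mathcal{K}_{x_k}-\mathcal{K})[\rho_{k-1}]\bigr].
\]
Two estimates control it.

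First, $\mathcal{K}_{x}-\mathcal{K}$ is small in the spread of $x$. Differentiating $\mathcal{K}_x$ in $x$ gives the generator $-i[H+JV(T/2+x),\cdot\,]$ acting before the partial trace. Since $f$ is Gaussian, the $JV$ part at late times is negligible (this is the tail factor already controlled in Eq.~\eqref{eq:channel}; it can be absorbed into the constant $C$). Hence $\norm{\partial_x\mathcal{K}_x}_{1\to1}\lesssim 2\norm{H}_\infty$ up to constants. Then
\[
\norm{\mathcal{K}_x-\mathcal{K}}_{1\to1}\le \int \rd y\,p(y)\,\norm{\mathcal{K}_x-\mathcal{K}_y}\lesssim 2\norm{H}_\infty\int\rd y\,p(y)|x-y| .
\]
Squaring and averaging over the Gaussian $p$ produces the factor $T_0^2\norm{H}^2_\infty$ times a numerical constant. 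The $4/\pi$ presumably comes from $(\mathbb{E}|x-y|)^2=2T_0^2/\pi$ or a similar moment, with the remaining constants collected into $C$.

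Second, $O_k$ is contracted toward a multiple of the identity. Trace preservation of $\mathcal{K}_{x}-\mathcal{K}$ kills the identity, so we may replace $O_k$ by $O_k-\Tr[\rho_\mathrm{fix}O_k]\mathbb{I}$. By the definition of the mixing time, or more precisely an exponential-contraction statement of the form
\[
\norm{(\mathcal{K}^\dagger)^{m}[O]-\Tr[\rho_\mathrm{fix}O]\,\mathbb{I}}_\infty\le C\norm{O}_\infty e^{-m/\tau_{\rm mix}},
\]
the increment is bounded by $C\norm{O}_\infty e^{-(M-k)/\tau_{\rm mix}}\cdot\norm{\mathcal{K}_{x_k}-\mathcal{K}}_{1\to1}$. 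Here $C$ is the constant of the exponential decay, obtained by submultiplicativity from $\tau_{\rm mix}$ at a fixed accuracy; this is where the constant $C$ in the statement enters. Squaring, averaging, and summing the geometric series $\sum_{j=0}^{M-1}e^{-2j/\tau_{\rm mix}}$ yields exactly the factor
\[
\frac{1-e^{-2M/\tau_{\rm mix}}}{1-e^{-2/\tau_{\rm mix}}} .
\]

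The main obstacle is the contraction step. It needs exponential decay in the $\infty$-norm for the adjoint channel with an explicit constant $C$, which must be derived from the trace-norm mixing time by duality together with the standard argument that mixing within $1/2$ (or $1/e$) after $\tau_{\rm mix}$ steps implies geometric decay. A secondary issue is making the derivative bound on $\mathcal{K}_x$ uniform despite the interaction term. We handle it either by noting that $V(t)$ is Gaussian-suppressed for $t\ge T/2$ when $T>\sigma$, or by absorbing an $\mathcal{O}(J n_B)$ correction into $C$.
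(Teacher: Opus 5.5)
Your proposal is correct and follows the paper's proof essentially step by step. Your Doob-martingale increments $Y_k-Y_{k-1}$ are exactly the paper's telescoping terms $D_j$, whose cross terms $\mathbb{E}_M[D_iD_j]$ vanish. The paper then likewise subtracts $\Tr[\rho_{\rm fix}O]\,\mathbb{I}$ using trace preservation, applies the contraction estimate with constant $C$, bounds the $x$-dependence of $\mathcal{K}_x$ by a Lipschitz estimate, and sums the same geometric series.

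The differences are minor. The paper does not derive the $\infty$-norm contraction from the trace-norm mixing time; it takes it as the definition of $C$, so the step you call the main obstacle is assumed rather than proved. It gets $\frac{4}{\pi}$ from $\mathbb{E}\norm{\mathcal{K}^\dagger_{x_j}-\mathcal{K}^\dagger}_\infty\le\int\rd x\,p(x)\norm{U(x_j+T)-U(x+T)}_\infty\le\frac{2}{\sqrt{\pi}}T_0\norm{H}_\infty$ followed by squaring, and it neglects the $JV$ term just as you propose. Your looser handling of the numerical constants is therefore at the same level of rigor as the paper's.
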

The proof of this bound is provided in App.~\ref{app:boundvariance}. The constant \(C\) is determined by the contraction estimate for a primitive channel
\begin{align}
    \norm{\mathcal{K}^{\dagger (n)}[O-\Tr[\rho_{\mathrm{fix}}O]\mathbb{I}]}_\infty
    < C e^{-n/\tau_{\mathrm{mix}}}\norm{O}_\infty.
\end{align}
The prefactor \(C\) depends on structural properties of the channel, such as the size of any Jordan blocks~\cite{Szehr_2014}. In general, \(C\) can be upper-bounded by a factor proportional to the mixing time. In many relevant cases, however, \(C=\mathcal{O}(1)\); for example, this is true when the channel is contractive, as at high temperatures~\cite{bakshi2025dobrushinconditionquantummarkov}.
We provide a numerical analysis of the fluctuations induced by a randomized evolution in Sec.~\ref {subsec:Observables}.

\begin{figure*}[t!]
    \centering
    \includegraphics[width=\linewidth]{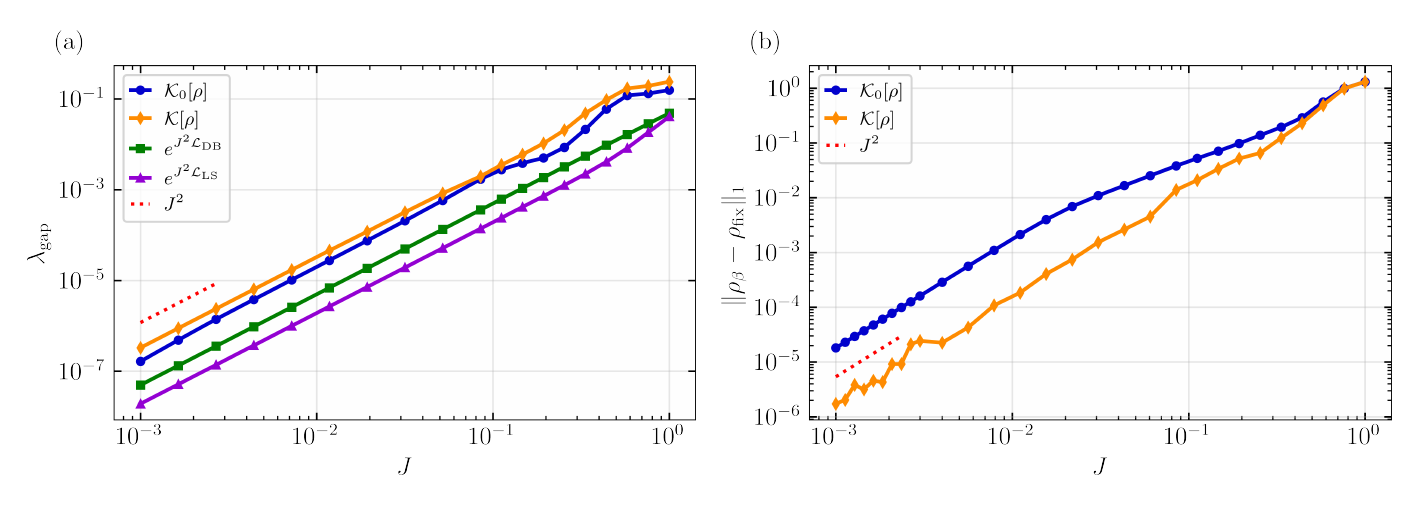}
    \caption{(a) Scaling of the spectral gap $\lambda_\mathrm{gap}$ as a function of the system-bath coupling strength $J$, for the channel $\mathcal{K}_0[\rho]$~(blue), $\mathcal{K}[\rho]$~(orange), $e^{J^2 \mathcal{L}_{\mathrm{LS}}}$~(purple) and $e^{J^2 \mathcal{L}_\mathrm{DB}}$~(green). The spectral gap scales in all cases as $J^2$.
    (b) Trace distance of fixed point from thermal state $\norm{\rho_\beta - \rho_\mathrm{fix}}_1$, for the channel $\mathcal{K}_0[\rho]$~(blue) and $\mathcal{K}[\rho]$~(orange). The fixed-point error scales as $J^2$ over a large range of coupling strengths. Both results shown for the mixed-field Ising model [Eq.~\eqref{eq:model}] with $n_S = 6,\ \beta = 1,\ T = 10$~(no resonance).}
    \label{fig:lambda_gap_J}
\end{figure*}

\section{Numerical analysis}\label{sec:Numerical results}

While Sec.~\ref{sec:Results} contains the main rigorous results, the present section complements them with additional numerical insights.
In Sec.~\ref{subsec:Spectral gap}, we analyze upper bounds for the mixing time of the channel in terms of the spectral gap, and provide numerical evidence for a quadratic scaling in $J$ over a wide range of coupling strengths.
In Sec.~\ref {subsec:Dephasing}, we investigate the role of the randomized evolution time used in the channel $\mathcal{K}[\rho]$~[cf. Eq.~\eqref{eq:channelrandomized}] and show that without it, the fixed-point error is non-perturbative in $J$.
 Finally, in Sec.~\ref{subsec:Observables}, we study the variance induced by different randomized implementations of the protocol.

Our analysis is based on the mixed-field Ising model
\begin{equation}\label{eq:model}
    H = \sum_i Z_i Z_{i+1} + g X_i + h Z_i.
\end{equation}
We choose $g = 0.9045,\ h = 0.809$, for which parameters the model is known to thermalize rapidly \cite{huse2013thermalisationparameters}. 
As jump operators, we choose the single qubit operator $A_a = X_a + Z_a$ on each qubit. Although other choices may yield faster mixing times, this choice is well-suited for studying the effect of resonances.

The width of the filter function $f(t)$ in Eq.~\eqref{eq:f} is specified by $\sigma = 1$. 
For all presented results, the mean evolution time satisfies $T > 4 \sigma$, and whenever the random evolution time is implemented, the width of $p(x)$ is set by $T_0 = 1$.
\subsection{Scaling of the spectral gap}\label{subsec:Spectral gap}

One important ingredient in the fixed-point error bound is the mixing time $\tau_{\rm mix}$. As shown in Sec.~\ref{subsec:ClosenessLindblad}, the channel $\mathcal{K}[\rho]$ implements up to a small error a Lindblad evolution of time $J^2$ followed by a unitary evolution of the system Hamiltonian. As long as the unitary evolution does not strongly affect the mixing, this suggests a scaling of the channel mixing time with $1/J^2$.

In this subsection, we numerically analyze an upper bound on the mixing time for the mixed-field Ising model [Eq.~\eqref{eq:model}] with $n_S=6$ sites and show that, in this case, this upper bound scales indeed as $1/J^2$.

While obtaining sharp estimates on the mixing time directly is challenging, we can analyze an upper bound in terms of the spectral gap~\cite{Temme_2010}. More specifically, let $\rho_{\rm fix}$ be the fixed point of a channel $\mathcal{K}[\rho]$, and consider the symmetrized channel
\begin{align}
    \mathcal{K}_{\rm sym}[\rho]
    =
    \rho_{\rm fix}^{-1/4}
    \mathcal{K}\!\left[\rho_{\rm fix}^{1/4} \rho \, \rho_{\rm fix}^{1/4}\right]
    \rho_{\rm fix}^{-1/4}.
\end{align}
Let $s_2$ denote the second-largest singular value of $\mathcal{K}_{\rm sym}[\rho]$, and define the spectral gap by
\begin{align}
    \lambda_{\rm gap} = 1 - s_2.
\end{align}
Then an upper bound on the mixing time is given by~\cite{Temme_2010}
\begin{align}
    \tau_{\mathcal{K},{\rm mix}}(\epsilon)
    \le
    \frac{1}{\lambda_{\rm gap}}
    \log\!\left(
        \frac{2 \norm{\rho_{\rm fix}^{-1/2}}}{\epsilon}
    \right).
\end{align}

In Fig.~\ref{fig:lambda_gap_J}~(a), we show the spectral gaps of $\mathcal{K}[\rho]$, $\mathcal{K}_0[\rho]$, $e^{J^2 \mathcal L_\mathrm{DB}}$ [Eq.~\eqref{eq:Gibbs}], and $e^{J^2 \mathcal L_\mathrm{LS}}$ [Eq.~\eqref{eq:LLSdefinition}].
The spectral gap scales as $J^2$ in all cases, which is consistent with a scaling of the mixing time $\tau_\mathrm{mix}$ with $1/J^2$ [Eq.~\eqref{eq:scsaling}].
As an interesting observation, including the unitary time evolution generated by the propagator $U_0$ appears to increase the spectral gap of the channel and therefore improves the corresponding upper bound on the mixing time. Understanding the origin of this effect remains an interesting question for future work.

In Fig.~\ref{fig:lambda_gap_J}~(b), we show the distance of the fixed points of $\mathcal{K}[\rho]$ and $\mathcal{K}_0[\rho]$ from the thermal state. This error scales with $J^2$, which matches our fixed-point error bounds [Theorem~\ref{thrm:fixedpointerror}].

\begin{figure*}[t!]
    \centering
    \includegraphics[width=\linewidth]{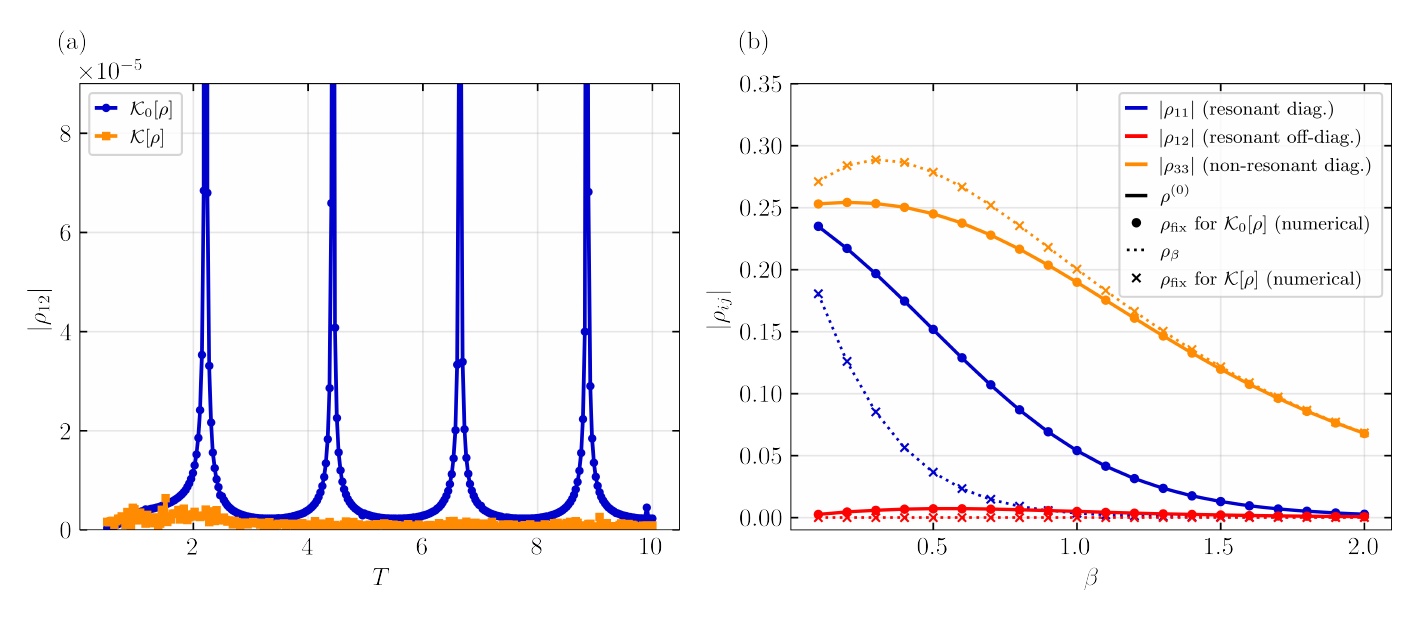}
    \caption{(a) Effect of the randomized evolution time [Eq.~\eqref{eq:channelrandomized}] in suppressing resonances. 
    Off-diagonal matrix element $|\rho_{12}|$ for the fixed point of $\mathcal{K}_0[\rho]$ [Eq.~\eqref{eq:single instance}]~(blue) and $\mathcal{K}[\rho]$ [Eq.~\eqref{eq:channelrandomized}]~(orange) as a function of the evolution time $T$. Resonance peaks are observed for $\mathcal{K}_0[\rho]$ but not for $\mathcal{K}[\rho]$ at $\omega_{12}T=2 \pi k$. Results shown for the mixed-field Ising model [Eq.~\eqref{eq:model}] with $n_S = 2,\ J = 0.01,\ \beta = 1$.
    (b) Effect of a resonance on the fixed point. Results shown for $n_S = 2,\ J = 0.01,\ T \approx 8.86$ to excite resonances between energy eigenstates 1 and 2. Diagonal $|\rho_{11}|$ in the resonant subspace~(blue), off-diagonal $|\rho_{12}|$ in the resonant subspace~(red), and diagonal $|\rho_{33}|$ in the non-resonant subspace~(orange) are compared for varying $\beta$. Resonances have no effect on the fixed points of $\mathcal{K}[\rho]$ ($\times$), which remains close to the thermal state $\rho_\beta$~(dotted). However, resonances cause the fixed points of $\mathcal{K}_0[\rho]$~($\bullet$) to deviate considerably, both in resonant and non-resonant blocks. This is correctly predicted by the solution for $\rho^{(0)}$ [Eq.~\eqref{eq:degeneratePTresult}]~(solid).}
    \label{fig:resonance_sweep_T_and_beta}
\end{figure*}

\subsection{Effect of random evolution times and resonances}\label{subsec:Dephasing}
In this subsection, we explore the effect of the randomized evolution time in the channel $\mathcal{K}[\rho]$ applied on the mixed-field Ising model [Eq.~\eqref{eq:model}] with $n_S = 2$ sites. As discussed in Sec.~\ref{subsec:Effectresonances}, many-body resonances can affect the fixed point significantly, and the randomization suppresses these effects.

As a first step, we directly probe the off-diagonal elements of the fixed point (in the energy eigenbasis) as a function of the evolution time $T$. We compare the results for the averaged channel $\mathcal{K}[\rho]$ with the deterministic channel $\mathcal{K}_{0}[\rho]$ corresponding to $x=0$ in Eq.~\eqref{eq:single instance}.  
The results are shown in Fig. \ref{fig:resonance_sweep_T_and_beta}~(a).
Resonance peaks occur for the channel $\mathcal{K}_0[\rho]$ at regular intervals in $T$ whenever $\omega_{ab}T=2 \pi k$ with $k$ an integer. In contrast, these resonances are suppressed by the channel $\mathcal{K}[\rho]$.

Next, we illustrate that resonances affect both resonant and non-resonant blocks of the fixed point. Choosing the evolution time $T$ to induce a resonance between two energy levels, the matrix elements of the fixed points of channels $\mathcal{K}_0[\rho]$ and $\mathcal{K}[\rho]$ are shown in Fig.~\ref{fig:resonance_sweep_T_and_beta}~(b) as a function of $\beta$. For the channel $\mathcal{K}[\rho]$, resonances have no visible effect, and the channel correctly converges to $\rho_\beta$ as $J = 0.01$ is sufficiently small. As for the channel $\mathcal{K}_0[\rho]$, many matrix elements in both resonant and non-resonant blocks significantly deviate from $\rho_\beta$, particularly at high temperatures. These results agree with the previous solution for $\rho^{(0)}$ [Eq.~\eqref{eq:degeneratePTresult}].

The effects of resonances, therefore, are not isolated to specific populations of the fixed point. The fixed point is entirely affected, and consequently, this leads to substantial fixed-point error. Fig. \ref{fig:trace_dist_resonance} shows the distance between the fixed point and $\rho_\beta$ as a function of $\beta$ when $T$ is chosen such that it  excites a resonance. Without the randomization step, $\mathcal{K}_0[\rho]$ has a large fixed-point error that does not vanish even in the $J\to 0^+$ limit, whereas the fixed-point error of $\mathcal{K}[\rho]$ is tunable by $J$.

\begin{figure}
    \centering
    \includegraphics[width=0.5\textwidth]{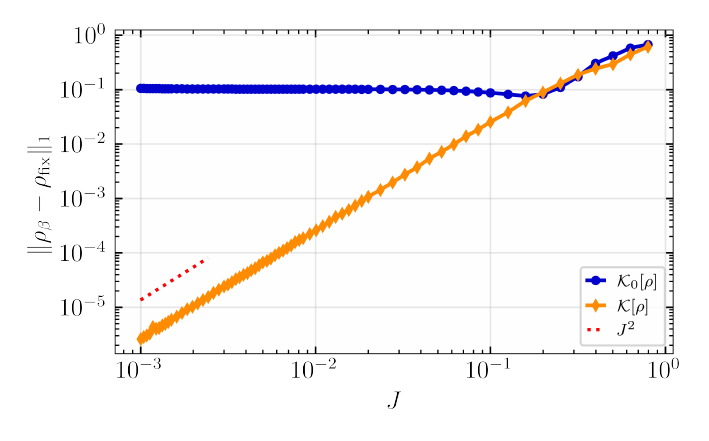}
    \caption{Trace distance of fixed point from thermal state $\norm{\rho_\beta - \rho_\mathrm{fix}}_1$ at a resonance, for the channel $\mathcal{K}_0[\rho]$~(blue) and $\mathcal{K}[\rho]$~(orange). Results shown for the mixed-field Ising model [Eq.~\eqref{eq:model}] with $n_S = 2,\ \beta = 1,\ T \approx 8.86$ to excite resonances between two energy eigenstates. Resonances lead to an $\mathcal{O}(J^0)$ fixed-point error for $\mathcal{K}_0[\rho]$, whereas the fixed-point error for $\mathcal{K}[\rho]$ remains $\mathcal{O}(J^2)$.}
    \label{fig:trace_dist_resonance}
\end{figure}

\subsection{Variance of observables}\label{subsec:Observables}
\begin{figure*}[t!]
    \centering
    \includegraphics[width=\linewidth]{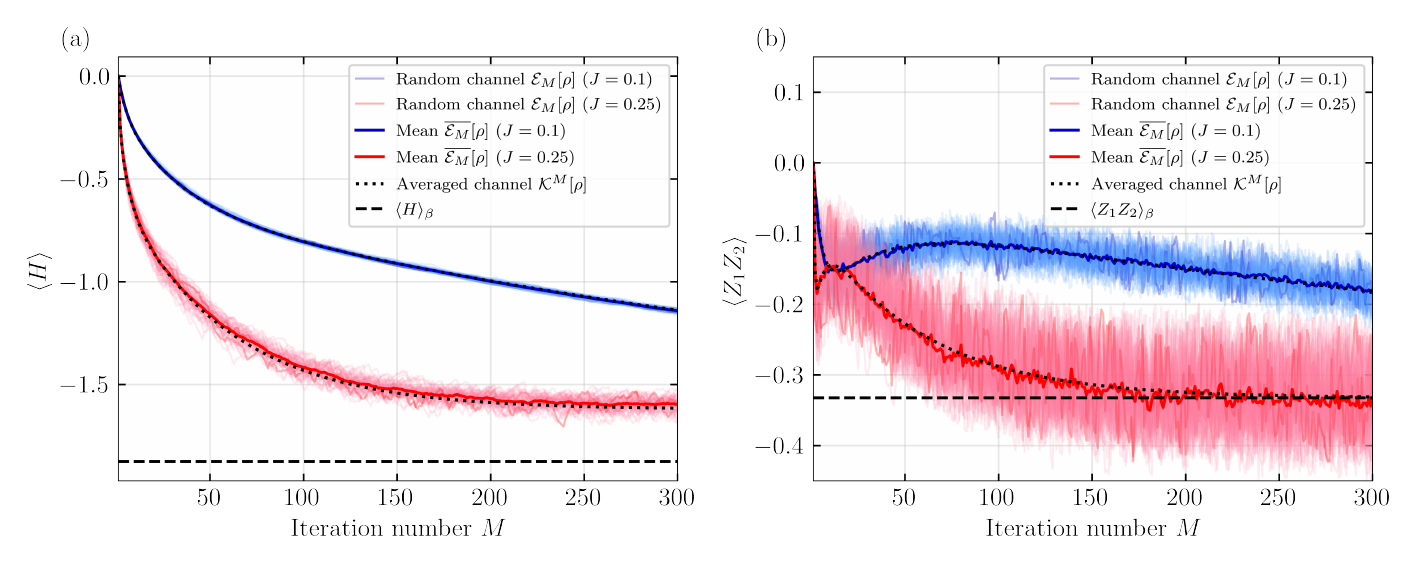}
    \caption{Evolution of thermal state preparation under 50 randomized channels $\mathcal{E}_M[\rho]$ (colored) and averaged channels $\mathcal{K}^M[\rho]$ (dotted). Results shown for the mixed-field Ising model [Eq.~\eqref{eq:model}] with $n_S = 2,\ \beta = 1,\ J = 0.1$~(blue) or 0.25~(red), $T = 10$ (no resonance), initial state $\rho \propto \mathbb{I}$ (maximally mixed state). (a) Evolution of average energy $\langle H \rangle$. (b) Evolution of two-point correlator $\langle Z_1 Z_2 \rangle$. The spread of results reflects the variance induced by random sampling. In contrast, averaging over random sequences significantly reduces the error and leads to a standard deviation (a) $\sigma_{H} = 0.008 \ (J=0.1),\ 0.034\ (J=0.25)$, (b) $\sigma_{Z_1 Z_2} = 0.019\ (J=0.1),\ 0.051\ (J=0.25)$. The equilibrium values of $\langle H \rangle, \langle Z_1 Z_2 \rangle$ deviate from those of the thermal state $\langle H \rangle_\beta, \langle Z_1 Z_2 \rangle_\beta$ (dashed) as these $J$ values are large, leading to fixed-point error.}
    \label{fig:variance_2qubit}
\end{figure*}

Finally, we investigate how using the randomized channel $\mathcal{E}_M[\rho]$, rather than the mean channel $\mathcal{K}[\rho]$, leads to additional fluctuations in observables. To this end, we construct 50 realizations of random channel sequences $\mathcal{E}_M[\rho]$ and measure the expectation value $\braket{O} = \Tr[O\,\mathcal{E}_M[\rho]]$ for a given observable $O$.

The results for two qubits are shown in Fig.~\ref{fig:variance_2qubit}, for the observables $O=H$ and $O=Z_1Z_2$. They indicate that the randomized implementation exhibits substantial fluctuations around the mean-channel result. Consequently, obtaining accurate estimates for observables cannot generally be done using a single random channel, and instead requires averaging over a number of channel realizations in parallel. This introduces an additional sampling overhead depending on the desired accuracy.

It is interesting to observe that the fluctuations decrease with increasing system size, as shown for $n_S=6$ sites in Fig.~\ref{fig:variance_6qubit}. One possible explanation in this case is that the mixed-field Ising model is thermalizing and therefore random trajectories concentrate. Understanding more precisely how the randomized evolution affects the variance would be a useful direction for future study.

As a final note, in addition to randomizing evolution times, other works \cite{ding2025endtoendefficientquantumthermal,chen2026overcominglambshiftsystembath} have also proposed randomizing the bath Hamiltonian and randomizing jump operators. While these are useful for bounding the mixing time, there is the practical trade-off of an increased variance. We present numerical evidence of this in App. \ref{app:morerandom}.

\begin{figure}
    \centering
    \includegraphics[width=0.5\textwidth]{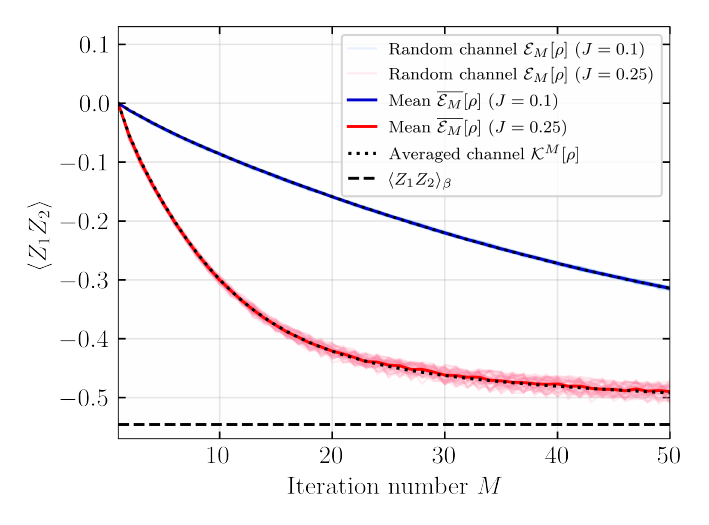}
    \caption{Evolution of two-point correlator $\langle Z_1 Z_2 \rangle$ under 50 randomized channels $\mathcal{E}_M[\rho]$~(colored) and averaged channels $\mathcal{K}^M[\rho]$~(dotted). Results shown for the mixed-field Ising model [Eq.~\eqref{eq:model}] with $n_S = 6,\ \beta = 1,\ J = 0.1$~(blue) or 0.25~(red), $T = 10$ (no resonance), initial state $\rho \propto \mathbb{I}$ (maximally mixed state). The final standard deviation is measured to be $\sigma_{Z_1 Z_2} = 0.002\ (J=0.1),\ 0.011\ (J=0.25)$, which is smaller than the 2-qubit case [Fig. \ref{fig:variance_2qubit}~(b)].}
    \label{fig:variance_6qubit}
\end{figure}

\section{Conclusion}\label{ref:Discussion}

In this work, we have studied an algorithm for thermal state preparation based on a time-dependent system-bath interaction. The algorithm requires only Hamiltonian evolution with local control over time-dependent interactions together with the ability to reset ancillas. It is therefore readily implementable on digital quantum devices and may also be feasible on analog quantum platforms.

Our first main result is
a rigorous fixed-point error bound that is governed by the mixing time of the channel and scales quadratically with the coupling strength \(J\). As a consequence, the error can, in principle, be made arbitrarily small by reducing \(J\). This result improves upon previous work on similar protocols, where such fixed-point error bounds did not explicitly account for the dependence on the system-bath coupling strength \cite{hahn2026efficientquantumthermalstate,ding2025endtoendefficientquantumthermal}, or were based on non-rigorous perturbative arguments \cite{lloyd2506quantum}. 

This favorable scaling in the coupling strength \(J\) comes at the cost of randomizing the duration of the system-bath interaction, and our analysis shows that this step cannot, in general, be omitted without degrading the accuracy of the thermal state preparation. In practice, the randomization means that the protocol implements evolution with a stochastic series of channels. This induces an additional variance of observables measured during the protocol: our second main contribution is to rigorously bound this resulting statistical error.

There are several important directions for future work. To fully establish practical feasibility of system-bath interaction protocols such as the one studied in this work, it is imperative to develop a more careful understanding of how to optimize the protocol with respect to its many parameters, and how optimal choices depend on features (e.g. the phase) of the system Hamiltonian and of its thermal state.
Recently, Ref.~\onlinecite{wang2026lindbladdynamicsrigorousguarantees} also showed that instead of using the system-bath coupling $J$ as a control parameter, a controlled fixed-point error can also be achieved by tuning parameters of the filter function at finite $J$. A natural task is therefore to clarify the differences, and possibly the interplay, between these complementary limits of controlled thermal state preparation using system-bath interactions.

On a more technical level, control of the fixed-point error in our bounds relies on the assumption that the rescaled mixing time of the protocol scales as \(o(J^{-2})\) as $J\to 0$. Although this assumption seems physically sound, and is supported by numerical results on the mixed-field Ising model, it is clearly desirable to understand its general validity and, if possible, provide rigorous guarantees. We flag this as an important avenue for subsequent investigations.

More generally, it would be valuable to understand the relation between the mixing time of thermal state preparation protocols based on system-bath interactions and  
their `proximate' exact Gibbs samplers.  For the latter class of Lindbladians, there is an increasing understanding of mixing times in physically relevant settings~\cite{rouze2024efficient,rouze2024optimal,bakshi2025dobrushinconditionquantummarkov,tong2024poly_mixing,smid2025poly_mixing,placke2024slow_mixing,rakovszky2024bottleneck,garmarnik2024slow_mixing}.

{\it A priori}, it is not entirely clear how to translate these results to near-term protocols, but obtaining mixing time estimates for the latter will be important to judge their applicability in practice. Here, we note that the related question of quantifying whether a generic system-bath model, possibly upon some suitable coarse-graining, is  `close' to an exact sampler has received some recent attention ~\cite{ScandiThermalization}, and may make a convenient starting point for exploring these issues.

\paragraph*{Note added:} During the completion of this manuscript, we became aware of Ref.~\onlinecite{chen2026overcominglambshiftsystembath}, which studies fixed-point errors for algorithms that prepare thermal states through system-bath interactions. While the channels analyzed in this work differ slightly from the one studied here, the strategies used to derive fixed-point error bounds as a function of $J$ agree with our approach.

\begin{acknowledgments}
We thank Jerome Lloyd and Zhiyan Ding for useful discussions and engaging correspondence.
We acknowledge support from a Leverhulme Trust International Professorship grant (Award Number:
LIP-2020-014, for a Leverhulme-Peierls Fellowship at Oxford to DH and BP) and the EPSRC under Grant No. EP/X030881/1 (DH) and a UKRI Frontier Research Grant/Horizon Europe
Guarantee (EPSRC Grant No. EP/Z002419/1, S.A.P.).
\end{acknowledgments}

\begin{appendix}
\begin{widetext}

\section{Closeness of fixed points and mixing times}\label{app:fixedpointsclose}
In this appendix, we present two lemmas adapted from Ref.~\onlinecite{ding2025endtoendefficientquantumthermal}. Lemma 1 connects the closeness of fixed points to the mixing time and the distance between the corresponding channels. This result is used repeatedly in the derivation of the fixed-point error bounds.

\begin{lemma}\label{lem:Fixedpoints}
    Let $\mathcal{K}_1[\rho]$ and $\mathcal{K}_2[\rho]$ be two channels with fixed points $\rho_1$ and $\rho_2$ respectively. Let $\tau_{1,\mathrm{mix}}(\epsilon)$ be the mixing time of $\mathcal{K}_1[\rho]$. The trace distance between fixed points $\rho_1$ and $\rho_2$ is bounded by
    \begin{subequations}
    \begin{align}
        \norm{\rho_1 - \rho_2}_1 &\leq \epsilon + \tau_{1,\mathrm{mix}}(\epsilon) \norm{\mathcal{K}_1[\rho_2] - \rho_2}_1 \label{theorem:mixingtime1} \\
        \norm{\rho_1 - \rho_2}_1 &\leq \epsilon + \tau_{1,\mathrm{mix}}(\epsilon) \norm{\mathcal{K}_1 - \mathcal{K}_2}_{1\to1} \label{theorem:mixingtime2}
    \end{align}
    \end{subequations}
\end{lemma}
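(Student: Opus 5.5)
The approach is a telescoping argument for the first inequality, then a reduction of the second inequality to the first.

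\textbf{First inequality.} Let $M=\tau_{1,\mathrm{mix}}(\epsilon)$. Since $\rho_1$ is the fixed point of $\mathcal{K}_1$, the definition of the mixing time in Eq.~\eqref{eq:def_tau_mix}, applied with initial state $\rho_2$, gives $\norm{\mathcal{K}_1^M[\rho_2]-\rho_1}_1\le\epsilon$. The triangle inequality then yields
\begin{align}
\norm{\rho_1-\rho_2}_1\le \epsilon+\norm{\mathcal{K}_1^M[\rho_2]-\rho_2}_1 .
\end{align}
The remaining term is handled by telescoping:
\begin{align}
\mathcal{K}_1^M[\rho_2]-\rho_2=\sum_{k=0}^{M-1}\mathcal{K}_1^{k}\big[\mathcal{K}_1[\rho_2]-\rho_2\big].
\end{align}
Each $\mathcal{K}_1^k$ is a quantum channel, hence completely positive and trace preserving. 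Such maps are contractive in trace norm on all Hermitian operators, not just on states, so $\norm{\mathcal{K}_1^k}_{1\to1}\le 1$ when restricted to Hermitian inputs. The operator $\mathcal{K}_1[\rho_2]-\rho_2$ is Hermitian, so each of the $M$ summands has trace norm at most $\norm{\mathcal{K}_1[\rho_2]-\rho_2}_1$. Summing gives Eq.~\eqref{theorem:mixingtime1}.

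\textbf{Second inequality.} Since $\rho_2=\mathcal{K}_2[\rho_2]$,
\begin{align}
\norm{\mathcal{K}_1[\rho_2]-\rho_2}_1=\norm{(\mathcal{K}_1-\mathcal{K}_2)[\rho_2]}_1\le\norm{\mathcal{K}_1-\mathcal{K}_2}_{1\to1}\norm{\rho_2}_1 .
\end{align}
Because $\rho_2$ is a density matrix, $\norm{\rho_2}_1=1$. Substituting this into Eq.~\eqref{theorem:mixingtime1} gives Eq.~\eqref{theorem:mixingtime2}.

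\textbf{Main subtlety.} The only nontrivial step is the trace-norm contraction of CPTP maps on Hermitian, traceless inputs, which is needed in the telescoping sum. I would justify it by writing the Jordan decomposition $X=P-Q$ with $P,Q\ge0$ and $\norm{X}_1=\Tr P+\Tr Q$. Positivity and trace preservation of the channel give $\norm{\mathcal{K}[X]}_1\le\Tr\mathcal{K}[P]+\Tr\mathcal{K}[Q]=\Tr P+\Tr Q=\norm{X}_1$. Beyond that, one only has to check that the supremum over $\rho$ in the definition of $\tau_{1,\mathrm{mix}}$ includes the particular state $\rho_2$, which it does.
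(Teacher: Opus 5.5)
Your proof is correct and follows essentially the same route as the paper: split off $\mathcal{K}_1^{\tau_{1,\mathrm{mix}}(\epsilon)}[\rho_2]$ with the triangle inequality, telescope using trace-norm contractivity of channels (the paper calls this the data processing inequality), and get the second bound from $\rho_2=\mathcal{K}_2[\rho_2]$. Your Jordan-decomposition argument just spells out the contractivity of CPTP maps on Hermitian inputs, which the paper uses without proof.
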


\noindent \textit{Proof of Lemma 1.} The proof of Eq.~\eqref{theorem:mixingtime1} begins with the following split:
\begin{align}
    \norm{\rho_1 - \rho_2}_1 & \leq \norm{\rho_1 - \mathcal{K}_1^{\tau_{1,\mathrm{mix}}(\epsilon)}[\rho_2]}_1 + \norm{\mathcal{K}_1^{\tau_{1,\mathrm{mix}}(\epsilon)}[\rho_2] - \rho_2}_1.
\end{align}
The first term is bounded by $\epsilon$ by definition of $\tau_\mathrm{mix}(\epsilon)$ [Eq.~\eqref{eq:def_tau_mix}]. The second term can be expressed as a telescoping sum that is  simplified using the data processing inequality.
\begin{align}\begin{split}
    \norm{\mathcal{K}_1^{\tau_{1,\mathrm{mix}}(\epsilon)}[\rho_2] - \rho_2}_1 &\leq \sum_{n=0}^{\tau_{1,\mathrm{mix}}(\epsilon) - 1} \norm{\mathcal{K}_1^{n+1}[\rho_2] - \mathcal{K}_1^n[\rho_2]}_1 \\
    &\leq \sum_{n=0}^{\tau_{1,\mathrm{mix}}(\epsilon) - 1} \norm{\mathcal{K}_1[\rho_2] - \rho_2}_1 \\
    &= \tau_{1,\mathrm{mix}}(\epsilon) \norm{\mathcal{K}_1[\rho_2] - \rho_2}_1
\end{split}\end{align}

This completes the proof. The proof of Eq.~\eqref{theorem:mixingtime2} then follows immediately using the fixed-point relation $\rho_2 = \mathcal{K}_2[\rho_2]$,
\begin{align}
    \norm{\rho_1 - \rho_2}_1 \leq \epsilon + \tau_{1,\mathrm{mix}}(\epsilon) \norm{\mathcal{K}_1[\rho_2] - \mathcal{K}_2[\rho_2]}_1 \leq \epsilon + \tau_{1,\mathrm{mix}}(\epsilon) \norm{\mathcal{K}_1 - \mathcal{K}_2}_{1\to1}
\end{align}

Lemma 2 shows that two sufficiently close channels have comparable mixing times. This is used to justify that the exact channel $\mathcal{K}[\rho]$ has similar mixing times to the approximate channel $\mathcal{K}_{\mathrm{LS}}[\rho]$.

\begin{lemma}\label{lem:closemixingtimes}
    Let $\mathcal{K}_1[\rho]$ and $\mathcal{K}_2[\rho]$ be two channels with mixing times $\tau_{1,\mathrm{mix}}(\epsilon), \tau_{2,\mathrm{mix}}(\epsilon)$ respectively. If the two channels are sufficiently close with $\tau_{1,\mathrm{mix}}(\epsilon/2) \norm{\mathcal{K}_1 - \mathcal{K}_2}_{1\to1} \leq \epsilon/2$,
    \begin{equation}
        \tau_{2,\mathrm{mix}}(2\epsilon) \leq \tau_{1,\mathrm{mix}}(\epsilon/2)
    \end{equation}
\end{lemma}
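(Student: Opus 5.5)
The plan is to compare the $M$-step iterates of the two channels directly, using a telescoping (hybrid) argument, and then to lean on the mixing property of $\mathcal{K}_1$. Set $M=\tau_{1,\mathrm{mix}}(\epsilon/2)$ and fix an arbitrary initial state $\rho$. We have to produce a state $\rho_2$ that is a fixed point of $\mathcal{K}_2$ with $\norm{\mathcal{K}_2^M[\rho]-\rho_2}_1\le 2\epsilon$ uniformly in $\rho$. Once that holds, the definition in Eq.~\eqref{eq:def_tau_mix} immediately gives $\tau_{2,\mathrm{mix}}(2\epsilon)\le M$.

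The first step is the hybrid bound
\begin{align}
\norm{\mathcal{K}_2^M[\rho]-\mathcal{K}_1^M[\rho]}_1\le\sum_{n=0}^{M-1}\norm{\mathcal{K}_2^{M-n-1}(\mathcal{K}_2-\mathcal{K}_1)\mathcal{K}_1^{n}[\rho]}_1\le M\norm{\mathcal{K}_1-\mathcal{K}_2}_{1\to1}\le \frac{\epsilon}{2}.
\end{align}
It uses the decomposition $\mathcal{K}_2^M-\mathcal{K}_1^M=\sum_n \mathcal{K}_2^{M-n-1}(\mathcal{K}_2-\mathcal{K}_1)\mathcal{K}_1^n$. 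For each term, contractivity of the channel $\mathcal{K}_2^{M-n-1}$ in trace norm (data processing) applies to the traceless difference, and $\mathcal{K}_1^n[\rho]$ is a normalized state.

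The second step bounds the distance to the fixed point of $\mathcal{K}_2$. By the choice of $M$, $\norm{\mathcal{K}_1^M[\rho]-\rho_1}_1\le\epsilon/2$, so the triangle inequality gives $\norm{\mathcal{K}_2^M[\rho]-\rho_1}_1\le\epsilon$. Lemma~\ref{lem:Fixedpoints}, Eq.~\eqref{theorem:mixingtime2}, with $\epsilon/2$ in place of $\epsilon$ yields $\norm{\rho_1-\rho_2}_1\le \epsilon/2+M\norm{\mathcal{K}_1-\mathcal{K}_2}_{1\to1}\le\epsilon$ for any fixed point $\rho_2$ of $\mathcal{K}_2$. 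Combining these, $\norm{\mathcal{K}_2^M[\rho]-\rho_2}_1\le 2\epsilon$ for every $\rho$, which proves the claim.

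The only delicate point is the notion of ``the'' fixed point of $\mathcal{K}_2$. By Brouwer, or by taking a Cesàro limit of iterates, $\mathcal{K}_2$ has at least one fixed state, and the argument above shows that every such state lies within $\epsilon$ of $\rho_1$. Because Eq.~\eqref{eq:def_tau_mix} is stated with respect to a single $\rho_{\mathrm{fix}}$, I will simply fix one fixed point $\rho_2$ at the outset. The bound then holds for it, and also for any other fixed point of $\mathcal{K}_2$. Uniqueness is not needed for the inequality.
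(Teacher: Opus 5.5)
Your proposal is correct and follows the paper's proof essentially step for step. It uses the same triangle split of $\norm{\mathcal{K}_2^M[\rho]-\rho_2}_1$ through $\mathcal{K}_1^M[\rho]$ and $\rho_1$, the same telescoping bound $M\norm{\mathcal{K}_1-\mathcal{K}_2}_{1\to1}$, and Lemma~\ref{lem:Fixedpoints} at accuracy $\epsilon/2$ for $\norm{\rho_1-\rho_2}_1$; your remark that the argument holds for any fixed point of $\mathcal{K}_2$, so uniqueness is not needed, makes explicit what the paper leaves implicit.
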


\noindent \textit{Proof of Lemma 2.} Consider $\norm{\mathcal{K}_2^{\tau_{1,\mathrm{mix}}(\epsilon/2)}[\rho] - \rho_2}_1$ for any $\rho$, which can be split as follows.
\begin{equation}
    \norm{\mathcal{K}_2^{\tau_{1,\mathrm{mix}}(\epsilon/2)}[\rho] - \rho_2}_1 \leq \norm{\mathcal{K}_2^{\tau_{1,\mathrm{mix}}(\epsilon/2)}[\rho] - \mathcal{K}_1^{\tau_{1,\mathrm{mix}}(\epsilon/2)}[\rho]}_1 + \norm{\mathcal{K}_1^{\tau_{1,\mathrm{mix}}(\epsilon/2)}[\rho] - \rho_1}_1 +\norm{\rho_1 - \rho_2}_1
\end{equation}

The second term is bounded by $\epsilon/2$ by definition of $\tau_{1,\mathrm{mix}}(\epsilon/2)$ [Eq.~\eqref{eq:def_tau_mix}]. The third term is bounded by $\epsilon/2 + \tau_{1,\mathrm{mix}}(\epsilon/2) \norm{\mathcal{K}_1 - \mathcal{K}_2}_{1\to1}$ [Eq.~\eqref{theorem:mixingtime2}]. The first term requires a telescoping sum as follows,
\begin{align}\begin{split}
    \norm{\mathcal{K}_2^{\tau_{1,\mathrm{mix}}(\epsilon/2)}[\rho] - \mathcal{K}_1^{\tau_{1,\mathrm{mix}}(\epsilon/2)}[\rho]}_1 &\leq \sum_{n=0}^{\tau_{1,\mathrm{mix}}(\epsilon/2)-1} \norm{\mathcal{K}_2^{n+1} \mathcal{K}_1^{\tau_{1,\mathrm{mix}}(\epsilon/2)-n-1}[\rho] - \mathcal{K}_2^n \mathcal{K}_1^{\tau_{1,\mathrm{mix}}(\epsilon/2)-n}[\rho]}_1 \\
    &= \sum_{n=0}^{\tau_{1,\mathrm{mix}}(\epsilon/2)-1} \norm{\mathcal{K}_2^n \!\left( (\mathcal{K}_2-\mathcal{K}_1)\!\left[\mathcal{K}_1^{\tau_{1,\mathrm{mix}}(\epsilon/2)-n-1}[\rho]\right] \right)}_1 \\
    &\leq \sum_{n=0}^{\tau_{1,\mathrm{mix}}(\epsilon/2)-1} \norm{\mathcal{K}_2-\mathcal{K}_1}_{1\to 1} = \tau_{1,\mathrm{mix}}(\epsilon/2) \, \norm{\mathcal{K}_2-\mathcal{K}_1}_{1\to 1}.
\end{split}\end{align}

Collecting these results, the overall bound is given below, where the final inequality uses $\tau_{1,\mathrm{mix}}(\epsilon/2) \norm{\mathcal{K}_1 - \mathcal{K}_2}_{1\to1} \leq \epsilon/2$,
\begin{equation}
    \norm{\mathcal{K}_2^{\tau_{1,\mathrm{mix}}(\epsilon/2)}[\rho] - \rho_2}_1 \leq \epsilon + 2\tau_{1,\mathrm{mix}}(\epsilon/2) \, \norm{\mathcal{K}_2-\mathcal{K}_1}_{1\to 1} \leq 2\epsilon
\end{equation}

This is the definition of $\tau_{2,\mathrm{mix}}(2\epsilon)$. Thus it is proven that $\tau_{2,\mathrm{mix}}(2\epsilon) \leq \tau_{1,\mathrm{mix}}(\epsilon/2)$.

\section{Derivation of the fixed-point error bound}\label{sec:Derivationbigpicture}
In this section, we provide a detailed explanation of how a trace distance bound between the fixed point of $\mathcal{K}[\rho]$, $\rho_{\rm{fix}}$, and the Gibbs state $\rho_{\beta}$ can be derived.
While the overall strategy to bound the fixed-point error is straightforward, some of the calculations are a bit more involved and deferred to Appendix~\ref{app:boundsremaining}.
\subsection{Approximation of the channel}
As a first step, the channel $\mathcal{K}[\rho]$ is approximated by another channel that is easier to analyze (Sec.~\ref{subsec:ClosenessLindblad}). 
Denote the propagator of the system dynamics as $U_0(t) = e^{-iHt}$. 
Define
\begin{align}
    \mathcal{K}_{\mathrm{LS}}[\rho]= \int_{-\infty}^{\infty} \rd x\, p(x) U_0(T+x)\,\e^{J^2 \mathcal L_\mathrm{LS}}\rho \, U_0^\dagger(T+x).
\end{align}
The Lindbladian $\mathcal L_\mathrm{LS}[\rho]$ agrees up to a coherent term with the detailed balance Lindbladian $\mathcal L_\mathrm{DB}[\rho]$,
\begin{align}
    \mathcal L_\mathrm{LS}[\rho]=-i[\Delta G,\rho]+\mathcal L_\mathrm{DB}[\rho],
\end{align}
with
\begin{align}
    \Delta G=G_{\mathrm{LS}}-G_{\rm{DB}}.
\end{align}
The Lamb shift term $G_{\mathrm{LS}}$ is given by
\begin{align}
    G_{\mathrm{LS}} = -\frac{1}{2i} \sum_{a\in\mathcal A} \int_{-\infty}^{\infty} \rd t_1 \int_{-\infty}^{\infty} \rd t_2 \, \text{sign}(t_1 - t_2) f^*(t_2) f(t_1) A^\dagger_{a}(t_2) A_{a}(t_1). 
\end{align}

Though $\mathcal{K}_{\mathrm{LS}}[\rho]$ is a good approximation of the exact channel $\mathcal{K}[\rho]$, they differ at order $J^2$. Using a Dyson expansion in the interaction picture, it can be shown that [cf. App. \ref{app:boundchanneldist}]
\begin{align}\label{eq:channeldistbound}
    \norm{\mathcal{K}-\mathcal{K}_{\mathrm{LS}}}_{1\rightarrow 1} = \mathcal{O} \left( J^2 n_B e^{\frac{\beta^2}{4\sigma^2}} e^{-\frac{T^2}{2\sigma^2 + 4 T_0^2}} + J^4 n_B^2 e^{\frac{\beta^2}{2\sigma^2}} \right).
\end{align}
We denote the fixed point of $\mathcal{K}_{\mathrm{LS}}[\rho]$ by $\rho_\mathrm{LS}$ and the mixing time of the channel $\mathcal{K}_{\mathrm{LS}}[\rho]$ by $\tau_{\rm{mix}}\equiv\frac{t_{\mathrm{mix},J}}{J^2}$. 
It follows then from Lemma~\ref{lem:Fixedpoints}:
\begin{align}\label{eq:boundb2}
    \norm{\rho_\mathrm{fix}-\rho_\mathrm{LS}}_1\leq \epsilon+\frac{t_{\mathrm{mix},J}(\epsilon)}{J^2} \norm{\mathcal{K}-\mathcal{K}_{\mathrm{LS}}}_{1\rightarrow 1} \leq \epsilon+t_{\mathrm{mix},J}(\epsilon) \mathcal{O} \left( n_B e^{\frac{\beta^2}{4\sigma^2}} e^{-\frac{T^2}{2\sigma^2 + 4 T_0^2}} + J^2 n_B^2 e^{\frac{\beta^2}{2\sigma^2}} \right).
\end{align}
\subsection{Approximate fixed point of $\mathcal{K}_{\mathrm{LS}}$}
To bound $\norm{\rho_{\beta}-\rho_\mathrm{LS}}_1$, we construct an approximate fixed point of $\mathcal{K}_{\mathrm{LS}}$,
$\mathcal{K}_{\mathrm{LS}}[\tilde{\rho}]-\tilde{\rho}=\mathcal{O}(J^4)$ and split the bound using the triangle inequality
\begin{align} \label{eq:fixedpointdistbound}
    \norm{\rho_{\beta}-\rho_\mathrm{LS}}_1\leq \norm{\rho_{\beta}-\tilde{\rho}}_1+\norm{\tilde{\rho}-\rho_\mathrm{LS}}_1\leq \epsilon + \norm{\rho_{\beta}-\tilde{\rho}}_1 + \frac{t_{\mathrm{mix},J}(\epsilon)}{J^2} \norm{\tilde{\rho}-\mathcal{K}_{\mathrm{LS}}[\tilde{\rho}]}_1,
\end{align}
where the last inequality makes use of Lemma~\ref{lem:Fixedpoints}.

An ansatz for the approximate fixed point $\tilde \rho$ is obtained using degenerate perturbation theory, as outlined in Ref.~\onlinecite{lloyd2506quantum} and Sec.~\ref{subsec:Approx}.
While degenerate perturbation theory is not controlled, the obtained expression at first order in $J^2$ is already sufficient for our purposes.

To obtain an approximate fixed point, we make an ansatz $\tilde{\rho}=\rho^{(0)}+J^2\sigma$, expand the channel $\mathcal{K}_{\mathrm{LS}}[\rho]$ to first order in $J^2$ and match the resulting expression at each order in $J^2$.

All expressions are obtained in an eigenbasis of the Hamiltonian $H$: $H\ket{\phi_a}=E_a \ket{\phi_a}$, $A_{ab}=\braket{\phi_a|A|\phi_b}$ and $\omega_{ab}=E_a-E_b$.

To simplify the expression, we make use of the relation
\begin{align}
    \left[ \int_{-\infty}^{\infty} \rd x\, p(x) \,U_0\left(T+x\right)\,\rho \, U_0^\dagger\left(T+x\right) \right]_{ab} =e^{-\frac{1}{4}\omega_{ab}^2T_0^2-i\omega_{ab}T} \rho_{ab}.
\end{align}
Matching terms of $\mathcal{K}_{\mathrm{LS}}[\tilde \rho] = \tilde \rho$ to order $J^0$ gives
\begin{align}\label{eq:appendixorderJ0}
    e^{-\frac{1}{4}\omega_{ab}^2T_0^2-i\omega_{ab}T} \rho^{(0)}_{ab}=\rho^{(0)}_{ab},
\end{align}
and implies that ${\rho}^{(0)}$ is diagonal in the eigenbasis of $H$ for $T_0\neq 0$. 

The degeneracy is lifted by the next order $J^2$:
The relation for the diagonal terms gives
\begin{align}
(\mathcal L_\mathrm{LS}[\rho^{(0)}])_{aa}=0,
\end{align}
and it follows that $\rho^{(0)}=\frac{e^{-\beta E_a}}{Z}=\rho_\beta$.
The off-diagonal terms $a\neq b$ have to satisfy the relation
\begin{align}
   e^{-i\omega_{ab}T}\left(-i[\Delta G,\rho_\beta]_{ab}+e^{-\frac{1}{4}\omega_{ab}^2T_0^2} \sigma_{ab}\right)=\sigma_{ab},
\end{align}
This relation is satisfied by
\begin{align}\label{eq:defsigma}
    \sigma_{ab}=\frac{-i e^{-i\omega_{ab}T}[\Delta G,\rho_\beta]_{ab}}{1-e^{-\frac{1}{4}\omega_{ab}^2 T_0^2-i\omega_{ab}T}} = \frac{e^{-\beta {\varepsilon}_{ab}}}{Z} \frac{2i\,\sinh\left( \frac{\beta \omega_{ab}}{2} \right) e^{i\omega_{ab} T}}{e^{-\frac{1}{4}\omega_{ab}^2 T_0^2} - e^{i\omega_{ab}T}} (\Delta G)_{ab},
\end{align}
with $\varepsilon_{ab}=\frac{E_a+E_b}{2}$.
Insertion of the ansatz $\tilde{\rho}$ into $\mathcal{K}_{\mathrm{LS}}$ shows that indeed $\mathcal{K}_{\mathrm{LS}}[\tilde \rho]-\tilde \rho=\mathcal{O}(J^4)$.

The bounds for $\norm{\sigma}_1$ and $\norm{\tilde \rho-\mathcal{K}_{\mathrm{LS}}[\tilde \rho]}_1$ are a bit tedious, and are obtained in App.~\ref{app:boundsremaining}.
They are given by~[cf. App.~\ref{app:boundG}]
\begin{align}\label{eq:sigmabound}
    \norm{\rho_\beta-\tilde{\rho}}_1=J^2\norm{\sigma}_1 = \mathcal{O} \left( J^2 \frac{n_B \beta}{\sigma^3} \frac{T^6}{T_0^4} e^{\frac{9\beta^2}{4\sigma^2}} \right),
\end{align}
and~[cf. App.~\ref{app:boundlast}]
\begin{align}\label{eq:bound1}
    \norm{\tilde{\rho}-\mathcal{K}_{\mathrm{LS}}[\tilde{\rho}]}_1 = \mathcal{O} \left( J^4 n_B^2 \left( 1 + \frac{ \beta}{\sigma^3} \frac{T^6}{T_0^4} \right) e^{\frac{5\beta^2}{2\sigma^2}} \right).
\end{align}
This gives the bound:
\begin{align}\label{eq:bound1}
    \norm{\rho_{\beta}-\rho_\mathrm{LS}}_1\leq \epsilon + \mathcal{O} \left( J^2 \frac{n_B \beta}{\sigma^3} \frac{T^6}{T_0^4} e^{\frac{9\beta^2}{4\sigma^2}} \right) + t_{\mathrm{mix},J}(\epsilon) \mathcal{O} \left( J^2 n_B^2 \left( 1 + \frac{ \beta}{\sigma^3} \frac{T^6}{T_0^4} \right) e^{\frac{5\beta^2}{2\sigma^2}} \right).
\end{align}

\subsection{Combining the bounds}
Finally, combining Eq.~\eqref{eq:boundb2} and Eq.~\eqref{eq:bound1} gives the total fixed-point error:
\begin{align}\label{eq:totalbound}
\begin{split}
    \norm{\rho_{\beta}-\rho_{\rm{fix}}}_1 &\leq \norm{\rho_{\beta}-\rho_\mathrm{LS}}_1 + \norm{\rho_\mathrm{fix}-\rho_\mathrm{LS}}_1 \\
    &\leq 2\epsilon + \mathcal{O} \left( J^2 \frac{n_B \beta}{\sigma^3} \frac{T^6}{T_0^4} e^{\frac{9\beta^2}{4\sigma^2}} \right) + t_{\mathrm{mix},J}(\epsilon) \mathcal{O} \left( n_B e^{\frac{\beta^2}{4\sigma^2}} e^{-\frac{T^2}{2\sigma^2 + 4 T_0^2}} + J^2 n_B^2 e^{\frac{\beta^2}{2\sigma^2}} \right) \\
    &\qquad + t_{\mathrm{mix},J}(\epsilon) \mathcal{O} \left( J^2 n_B^2 \left( 1 + \frac{ \beta}{\sigma^3} \frac{T^6}{T_0^4} \right) e^{\frac{5\beta^2}{2\sigma^2}} \right).
\end{split}
\end{align}
Eq.~\eqref{eq:totalbound} is the explicit expression for the fixed-point error stated in Theorem~\ref{thrm:fixedpointerror}.

\section{Explicit calculations for the fixed-point error bound}\label{app:boundsremaining}

This appendix presents the detailed derivations underlying the fixed-point error bound summarized in App.~\ref{sec:Derivationbigpicture}. The derivation of the channel-approximation error \(\norm{\mathcal{K}-\mathcal{K}_{\mathrm{LS}}}_{1\rightarrow 1}\)~[cf. Eq.~\eqref{eq:boundb2}] is given in App.~\ref{app:boundchanneldist}. The remaining ingredients entering the bound on \(\norm{\rho_\beta-\rho_\mathrm{LS}}_1\) [Eq.~\eqref{eq:fixedpointdistbound}] are derived in the subsequent subsections: App.~\ref{app:boundG} treats the term \(\norm{\rho_\beta-\tilde{\rho}}_1\), while App.~\ref{app:boundlast} derives the bound on \(\norm{\tilde{\rho}-\mathcal{K}_{\mathrm{LS}}[\tilde{\rho}] }_1\).

\subsection{Bounds on the channel distance}\label{app:boundchanneldist}

In this subsection, the bound for the channel distance $\norm{\mathcal{K}-\mathcal{K}_{\mathrm{LS}}}_{1\rightarrow 1}$ stated in Eq.~\eqref{eq:channeldistbound} is derived.
To do so, observe:
\begin{align}
    \norm{\mathcal{K}-\mathcal{K}_{\mathrm{LS}}}_{1\rightarrow 1} \leq \int_{-\infty}^{\infty} \rd x\, p(x) \norm{ \mathcal{K}_x[\cdot] - e^{-iH(T+x)} e^{J^2 \mathcal L_\mathrm{LS}}[\cdot] e^{iH(T+x)} }_{1\rightarrow 1}.
 \end{align}
The channel $\mathcal{K}_x[\rho]$ can be represented as
\begin{align}
    \mathcal{K}_x[\rho]= e^{-iH(T+x)}\Tr_B \left[\tilde U_x [\rho \otimes \rho_B^0] \tilde U_x^\dagger \right] e^{iH(T+x)}\equiv e^{-iH(T+x)}\tilde{\mathcal{K}}_x[\rho] e^{iH(T+x)},
\end{align}
where $\tilde{U}_x$ is the propagator in the interaction picture
\begin{equation}
    \tilde U_x = \mathcal{T} \exp\left( -i \int_{-T/2}^{T/2+x} \rd t \, J V(t)\right).
\end{equation}

It follows from the unitary invariance of the norm,
\begin{align}\begin{split}\label{eq:boundKinfty}
    \norm{\mathcal{K}-\mathcal{K}_{\mathrm{LS}}}_{1\rightarrow 1} &\leq \int_{-\infty}^{\infty} \rd x\, p(x) \norm{e^{-iH(T+x)} \left[\tilde{\mathcal{K}}_x[\cdot]-e^{J^2 \mathcal L_\mathrm{LS}}[\cdot]\right] e^{iH(T+x)}}_{1 \rightarrow 1} \\
    &= \int_{-\infty}^{\infty} \rd x\, p(x) \norm{\tilde{\mathcal{K}}_x -e^{J^2 \mathcal L_\mathrm{LS}}}_{1\rightarrow 1},
\end{split}\end{align}
and thus it suffices to analyze the channels $\tilde{\mathcal{K}}_x[\rho]$ and $\e^{J^2 \mathcal L_\mathrm{LS}}[\rho]$.

Our strategy is to expand both channels in powers of $J$. For the channel $\Tilde{\mathcal{K}}_x$, we use a Dyson series for the system-bath propagator, and for the other channel we use the expansion of the exponential function. Keeping terms up to order $J^2$, this yields
\begin{align}
    \tilde{\mathcal{K}}_x[\rho] &= \rho + J^2 \tilde{\mathcal{K}}_x^{(1)}[\rho] + \mathcal{O}\left(J^4 n_B^2\norm{f}_{L^\infty}^4\right), \\
    e^{J^2 \mathcal L_\mathrm{LS}}[\rho] &= \rho + J^2 \mathcal L_\mathrm{LS}[\rho] + \mathcal{O}(J^4 \norm{\mathcal{L}_{\mathrm{LS}}}_{1\rightarrow 1}^2).
\end{align}
Comparing these two terms yields a bound on their difference

\begin{equation}\label{eq:channeldistboundsplit}
    \Vert \tilde{\mathcal{K}}_x - \e^{J^2 \mathcal L_\mathrm{LS}} \Vert_{1\rightarrow 1} \leq \Vert \e^{J^2 \mathcal L_\mathrm{LS}} - \tilde{\mathcal{K}}_x \Vert_{1\rightarrow 1}= J^2 \Vert \mathcal{L}_{\mathrm{LS}} - \tilde{\mathcal{K}}_x^{(1)} \Vert_{1\rightarrow 1} + \mathcal{O}\left(J^4 n_B^2(\norm{f}^4_{L^\infty}+\norm{\mathcal{L}_{\mathrm{LS}}}_{1\rightarrow 1}^2)\right).
\end{equation}

We will now use the explicit expressions for $\mathcal L_\mathrm{LS}$ and $\Tilde K_x^{(1)}$ to bound the difference in their image (the numerator of the rhs above) for arbitrary $\rho$.
To this end, we first derive the expression for $\tilde{\mathcal{K}}_x^{(1)}[\rho]$, using the Dyson expansion for $\tilde U_x$ given by
\begin{equation}\begin{split}
    \tilde U_x &= \mathbb{I} - i J \int_{-T/2}^{T/2+x} \rd t_1 \, V(t_1) - J^2 \int_{-T/2}^{T/2+x} \rd t_1 \int_{-T/2}^{t_1} \rd t_2 \, V(t_1) V(t_2) \\
    &+ i J^3 \int_{-T/2}^{T/2+x} \rd t_1 \int_{-T/2}^{t_1} \rd t_2 \int_{-T/2}^{t_2} \rd t_3 \, V(t_1) V(t_2) V(t_3) + \mathcal{O} \left(J^4 \left[\int_{-T/2}^{T/2+x} \rd t \, \norm{V(t)}_\infty \right]^4 \right).
\end{split}\end{equation}
Odd orders in $J$ vanish after tracing out the bath degrees of freedom.
At order $J^2$, there are three contributions:
\begin{align}\begin{split}
    &\text{Tr}_B \left[ \left(- i J \int_{-T/2}^{T/2+x} \rd t_1 \, V(t_1)\right) [\rho \otimes \rho_B^0] \left(i J \int_{-T/2}^{T/2+x} \rd t'_1 \, V^\dagger(t'_1)\right) \right] = J^2 \sum_{a} \tilde L_{a} \rho \tilde L_{a}^\dagger, \\
    &\text{Tr}_B \left[ \left(- J^2 \int_{-T/2}^{T/2+x} \rd t_1 \int_{-T/2}^{t_1} \rd t_2 \, V(t_1) V(t_2) \right)  [\rho \otimes \rho_B^0] \right] \\
    &\qquad = -J^2 \sum_a \int_{-T/2}^{T/2+x} \rd t_1 \int_{-T/2}^{t_1} \rd t_2 \, f^*(t_1) f(t_2) A^\dagger_{a}(t_1) A_{a}(t_2) \rho,\\
    &\text{Tr}_B \left[ [\rho \otimes \rho_B^0] \left(- J^2 \int_{-T/2}^{T/2+x} \rd t'_1 \int_{-T/2}^{t'_1} \rd t'_2 \, V^\dagger(t'_2) V^\dagger(t'_1) \right) \right] \\
    &\qquad = -J^2 \rho \sum_a \int_{-T/2}^{T/2+x} \rd t'_1 \int_{-T/2}^{t'_1} \rd t'_2 \, f(t'_1) f^*(t'_2) A^\dagger_{a}(t'_2) A_a(t'_1).
\end{split}\end{align}
It can be shown that these terms sum up to $J^2 \tilde{\mathcal{K}}^{(1)}_x[\rho]$, where
\begin{equation} \label{eq:Kx1}
    \tilde{\mathcal{K}}_x^{(1)}[\rho] = -i[\tilde G_{\mathrm{LS}},\rho] + \sum_a \tilde L_{a} \rho \tilde L_{a}^\dagger - \left\{ \frac{1}{2} \tilde L_{a}^\dagger \tilde L_{a}, \rho \right\}.
\end{equation}
Here, $\tilde G_{\mathrm{LS}}$ and $\tilde L_{a}$ are nearly identical to $G_{\mathrm{LS}}$ [Eq.~\eqref{eq:GLS}] and $L_a$ [Eq.~\eqref{eq:jump}], but with their integration domains restricted to $(-T/2,\, T/2+x)$ instead of $(-\infty, \infty)$. This causes $\tilde{\mathcal{K}}^{(1)}_x[\rho]$ to differ from $\mathcal{L}_{\mathrm{LS}}[\rho]$ by a finite-time tail error.

This completes the expansion of $\tilde{\mathcal{K}}_x[\rho]$:
\begin{align}
    \tilde{\mathcal{K}}_x[\rho]=\rho+J^2 \tilde{\mathcal{K}}_x^{(1)}[\rho]+ \mathcal{O} \left(n_B^2 \left[\int_{-T/2}^{T/2+x} \rd t \, |f(t)| \norm{A_a(t)}_\infty \right]^4 \norm{\rho}_1 \right)=\rho+J^2 \tilde{\mathcal{K}}_x^{(1)}[\rho]+ \mathcal{O} \left(J^4 n_B^2 e^{\frac{\beta^2}{2\sigma^2}} \right).
\end{align}

We continue with bounding $\Vert \mathcal{L}_{\mathrm{LS}}[\rho] - \tilde{\mathcal{K}}_x^{(1)}[\rho] \Vert_1$, which represents the tail error. We obtain from Eq.~\eqref{eq:Kx1},
\begin{equation}\begin{split}
    \mathcal{L}_{\mathrm{LS}}[\rho] - \tilde{\mathcal{K}}_x^{(1)}[\rho] &= \sum_a \iint\nolimits_{\mathbb{R}^2 \setminus (-T/2,\,T/2+x)} dt_1\, dt_2 \, f^*(t_1) f(t_2) \Bigg[ -\frac 12 \text{sign}(t_1 - t_2) \left[A_{a}^\dagger(t_1) A_{a}(t_2), \rho \right] \\
    &\qquad - \frac12  \{ A^\dagger_{a}(t_1) A_{a}(t_2), \rho\} + A_{a}(t_2) \rho A^\dagger_{a}(t_1)
    \Bigg].
\end{split}\end{equation}

Using $\Vert A_a(t) \Vert_\infty \leq 1$, Holder's inequality gives $\left \Vert \left[ A_{a}(t_1) A_{a}(t_2), \rho \right] \right \Vert_1 \leq 2 \Vert \rho \Vert_1$, $\left \Vert \{ A_{a}(t_1) A_{a}(t_2), \rho \} \right \Vert_1 \leq 2 \Vert \rho \Vert_1$, $\left\Vert A_{a}(t_2) \rho A^\dagger_{a}(t_1) \right\Vert_1 \leq \Vert \rho \Vert_1$. This leads to
\begin{equation}\begin{split}
    \Vert \mathcal{L}_{\mathrm{LS}}[\rho] - \tilde{\mathcal{K}}^{(1)}[\rho] \Vert_1 &\leq 3 n_B \Vert \rho \Vert_1 \iint\nolimits_{\mathbb{R}^2 \setminus (-T/2,\, T/2+x)} \rd t_1\, \rd t_2 \, |f(t_1)| |f(t_2)| \\
    &\leq 6 n_B \Vert \rho \Vert_1 \left[ \int_{-\infty}^{\infty} \rd t\, |f(t)| \right] \left[ \int_{|t|>T/2-|x|} \rd t \, |f(t)| \right].
\end{split}\end{equation}

These integrals can be evaluated for the form of $f(t)$ chosen in Eq.~\eqref{eq:f} 
and thus give
\begin{equation}
    \Vert \mathcal{L}_{\mathrm{LS}}[\rho] - \tilde{\mathcal{K}}^{(1)}[\rho] \Vert_1 < 6n_B  e^{\frac{\beta^2}{4\sigma^2}}\text{erfc} \left( \frac{T-2|x|}{\sqrt{2}\sigma} \right)\norm{\rho}_1. 
\end{equation}

To obtain the explicit expression for  $\mathcal{O}\left(J^4\Vert \mathcal{L}_{\mathrm{LS}}\Vert^2_{1\rightarrow1}\right)$, we use 
\begin{equation} \label{eq:LLSnormbound}
    \Vert \mathcal{L}_{\mathrm{LS}}[\rho] \Vert_1 < 3 n_B e^{\frac{\beta^2}{4\sigma^2}} \Vert \rho \Vert_1,
\end{equation}
and thus get
\begin{equation} \label{eq:LLS2normbound}
    \mathcal{O}\left(J^4\Vert \mathcal{L}_{\mathrm{LS}}\Vert^2_{1\rightarrow1}\right) = \mathcal{O}\left(J^4 n_B^2 e^{\frac{\beta^2}{2\sigma^2}} \right).
\end{equation}
Combining all terms, this gives
\begin{equation}
    \Vert \tilde{\mathcal{K}}_x - \tilde{\mathcal{K}}_{\infty} \Vert_{1\rightarrow 1} = \mathcal{O} \left( J^2 n_B \, \text{erfc} \left( \frac{T-2|x|}{\sqrt{2}\sigma} \right) e^{\frac{\beta^2}{4\sigma^2}} + J^4 n_B^2 e^{\frac{\beta^2}{2\sigma^2}} \right).
\end{equation}
The final integration over $p(x)$ in Eq.~\eqref{eq:boundKinfty} leads to the final bound 
\begin{align}
\Vert \mathcal{K} -  \mathcal{K}_{\mathrm{LS}} \Vert_{1\rightarrow 1}= \mathcal{O} \left( J^2 n_B e^{\frac{\beta^2}{4\sigma^2}} e^{-\frac{T^2}{2\sigma^2 + 4 T_0^2}} + J^4 n_B^2 e^{\frac{\beta^2}{2\sigma^2}} \right).
\end{align}
\subsection{Bounds on the approximate fixed point}\label{app:boundG}
In this subsection, the bound for the approximate fixed point $\norm{\rho_\beta-\tilde{\rho}}_1=J^2 \norm{\sigma}_1$ stated in Eq.~\eqref{eq:sigmabound} is derived, where $\tilde{\rho}$ is defined by $\mathcal{K}_{\mathrm{LS}}[\tilde{\rho}]-\tilde{\rho}=\mathcal{O}(J^4)$.

From the definition in Eq.~\eqref{eq:defsigma}, it follows that $\sigma$ can be expressed as
\begin{equation}
    \sigma = \rho_\beta^{1/2} \Omega \rho_\beta^{1/2},
\end{equation}
where $\Omega$ is a matrix whose components in the energy eigenbasis are given by
\begin{equation}
    \Omega_{ab} = F(\omega_{ab}) (\Delta G)_{ab},
\end{equation}
with
\begin{equation}
    F(\omega_{ab}) = \frac{2i\,\sinh\left( \frac{\beta \omega_{ab}}{2} \right) e^{i\omega_{ab} T}}{e^{-\frac{1}{4}\omega_{ab}^2 T_0^2} - e^{i\omega_{ab}T}}.
\end{equation}
Using Holder's inequality, and $\Vert \rho_\beta^{1/2} \Vert_2 = 1$, this gives 
\begin{equation}
    \Vert \sigma \Vert_1 \leq \Vert \rho_\beta^{1/2} \Vert_2 \Vert \Omega \Vert_\infty \Vert \rho_\beta^{1/2} \Vert_2 = \Vert \Omega \Vert_\infty.
\end{equation}
It remains to bound $\Vert \Omega \Vert_\infty$. 

To do this, split $\Omega = \Omega_\mathrm{LS} - \Omega_\mathrm{DB}$, and consider each of their Bohr frequency representations,
\begin{subequations}
\begin{align}
    \Omega_\mathrm{DB} &= \sum_{a} \sum_{\nu_1,\nu_2\in\mathfrak{B}} \frac{1}{2i} F(\nu_1-\nu_2 ) \tanh \left( \frac{\beta (\nu_1 - \nu_2)}{4} \right) \int_{-\infty}^\infty \rd t_1 \int_{-\infty}^\infty \rd t_2 \, f(t_1) f^*(t_2) e^{i\nu_1 t_1 - i\nu_2 t_2} (A_{\nu_2}^a)^\dagger A_{\nu_1}^a, \label{eq:omegaDB} \\
    \Omega_\mathrm{LS} &= \sum_{a} \sum_{\nu_1,\nu_2\in\mathfrak{B}} -\frac{1}{2i} F(\nu_1 - \nu_2) \int_{-\infty}^\infty \rd t_1 \int_{-\infty}^\infty \rd t_2 \, \text{sign}(t_1-t_2) f(t_1) f^*(t_2) e^{i\nu_1 t_1 - i\nu_2 t_2} (A_{\nu_2}^a)^\dagger A_{\nu_1}^a. \label{eq:omegaLS}
\end{align}
\end{subequations}
\subsubsection{Bounding $\Omega_\mathrm{DB}$}
To bound $\Vert \Omega_\mathrm{DB} \Vert_\infty$ [Eq.~\eqref{eq:omegaDB}], 
make a change of variables $t_+ = \frac{t_1+t_2}{2}$, $t_- = \frac{t_1-t_2}{2}$, $\nu_+ = \nu_1+\nu_2$, $\nu_- = \nu_1-\nu_2$ and substitute $f(t) = \sqrt{\frac{2}{\pi \sigma^2}} e^{-\frac{2}{\sigma^2}\left(t -\frac{i\beta}{4} \right)^2}$.
This gives a separable product $\hat h_+(\nu_+) \, \hat h_-(\nu_-)$,
\begin{equation}
    \Omega_\mathrm{DB} = \sum_a \sum_{\nu_1,\nu_2\in\mathfrak{B}} -\frac{2i}{\pi\sigma^2} \underbrace{F(\nu_- ) \tanh \left( \frac{\beta\nu_-}{4} \right) \int_{-\infty}^\infty \rd t_- \, e^{-i\nu_- t_-} e^{-\frac{4}{\sigma^2}t_-^2}}_{\hat h_-(\nu_-)} \underbrace{\int_{-\infty}^\infty \rd t_+ \, e^{-i\nu_+ t_+} e^{-\frac{4}{\sigma^2}\left( t_+ + \frac{i\beta}{4} \right)^2}}_{\hat h_+(\nu_+)} (A_{\nu_2}^a)^\dagger A_{\nu_1}^a.
\end{equation}
Corollary A.1 of Ref.~\onlinecite{CKG23} states that
\begin{equation}
    \sum_a \sum_{\nu_1,\nu_2\in\mathfrak{B}} \hat h_+(\nu_+) \hat h_-(\nu_-) (A_{\nu_2}^a)^\dagger A_{\nu_1}^a = \sum_a \int_{-\infty}^{\infty} \rd t_- \, h_-(t_-) e^{-iH t_-} \left( \int_{-\infty}^{\infty} \rd t_+ \, h_+(t_+) A_{a}^\dagger(t_+) A_{a}^\dagger(-t_+) \right) e^{iH t_-},
    \label{id_chenA1}
\end{equation}
where the Fourier transform is defined as $\hat h(\nu) = \int_{-\infty}^{\infty} \rd t \, h(t)e^{-i\nu t} \Leftrightarrow h(t) =  \frac{1}{2\pi} \int_{-\infty}^{\infty} \rd\nu \, h(\nu) e^{i\nu t}$.

Using $\Vert e^{iHt} \Vert_\infty = 1$, and $\Vert A_a(t) \Vert_\infty \leq 1$, a bound for $\Vert \Omega_\mathrm{DB} \Vert_\infty$ is therefore given by
\begin{equation}\begin{split}\label{eq_omegaDBinfnorm}
    \Vert \Omega_\mathrm{DB} \Vert_\infty &= \frac{2}{\pi \sigma^2} \left\Vert \sum_a \int_{-\infty}^{\infty} \rd t_- \, h_-(t_-) e^{-iH t_-} \left( \int_{-\infty}^{\infty} \rd t_+ \, h_+(t_+) A_{a}^\dagger(t_+) A_{a}^\dagger(-t_+) \right) e^{iH t_-} \right\Vert_\infty \\
    &\leq \frac{2n_B}{\pi \sigma^2} \int_{-\infty}^{\infty} \rd t_- \, |h_-(t_-)| \int_{-\infty}^{\infty} \rd t_+ \, |h_+(t_+)|.
\end{split}\end{equation}
$\int_{-\infty}^{\infty} \rd t_+ \, |h_+(t_+)|$ can be found explicitly through an inverse Fourier transform of $\hat h_+(\nu_+)$. This leads to
\begin{equation}
    \int_{-\infty}^\infty \rd t_+ \, |h_+(t_+)| = \frac{\sigma}{4\sqrt\pi} e^{\frac{\beta^2}{4\sigma^2}}.
    \label{eq_h+1normbound}
\end{equation}
It remains to estimate 
$\int_{-\infty}^{\infty} \rd t_- \, |h_-(t_-)|$ . To do so, consider the explicit expression for $\hat h_-(\nu)$,
\begin{equation}
    \hat h_-(\nu) = 2i \sqrt \pi \sigma \frac{e^{i\nu T}}{e^{-\frac{1}{4}\nu^2 T_0^2} -  e^{i\nu T}} \sinh^2\left( \frac{\beta \nu}{4} \right) e^{-\frac{\nu^2 \sigma^2}{16}}.
\end{equation}

Our strategy is to express $\hat h_-(\nu)$ as a product of two appropriately chosen functions $\hat h_-(\nu) = 2i\sqrt\pi \sigma \ \hat p(\nu) \hat q(\nu)$, so that by the convolution theorem, $h_-(t) = 2i\sqrt\pi \sigma \ (p * q)(t)$. $\int_{-\infty}^{\infty} \rd t \, |h_-(t)|$ can then be bounded using Young's convolution inequality,
\begin{equation}\label{eq_h-1normbound}
    \int_{-\infty}^{\infty} \rd t \, |h_-(t)| \equiv \Vert h_-(t) \Vert_{L^1} = 2 \sqrt{\pi} \sigma \Vert (p*q)(t) \Vert_{L^1} \leq 2 \sqrt{\pi} \sigma \Vert p(t) \Vert_{L^1} \Vert q(t) \Vert_{L^1}.
\end{equation}
Here, $p(t), q(t)$ denote the inverse Fourier transforms of $\hat{p}(\nu), \, \hat{q}(\nu)$.
Denoting $\alpha = \frac{\sigma^2}{32}$, we pick
\begin{equation}
    \hat p(\nu) = \frac{1}{\nu} \sinh^2 \left( \frac{\beta \nu}{4} \right) e^{i\nu T - \alpha \nu^2},\qquad
    \hat q(\nu) = \frac{\nu e^{-\alpha \nu^2}}{e^{-\frac{1}{4}\nu^2 T_0^2} - e^{i\nu T}}.
\end{equation}

To determine $\Vert p(t) \Vert_{L^1}$, the inverse Fourier transform $p(t)$ is expressed by substituting $\frac{1}{\nu} \sinh^2 \left( \frac{\beta \nu}{4} \right) = \int_0^{\beta/4} \rd x \, \sinh(2\nu x)$ and reversing the order of integration.
\begin{equation}
    p(t) = \frac{1}{2\pi}\int_0^{\beta/4} \rd x  \int_{-\infty}^{\infty} \rd \nu \, \sinh(2\nu x) e^{i\nu (T+t) - \alpha \nu^2} = \frac{i}{2\sqrt{\pi \alpha}} e^{-\frac{(T+t)^2}{4\alpha}} \int_0^{\beta/4} \rd x \, e^{\frac{x^2}{\alpha}} \sin \left( \frac{T+t}{\alpha}x \right).
\end{equation}
$|p(t)|$ is now bounded using $\left| \sin \left( \frac{T+t}{\alpha} \right) \right| < 1$ to obtain $|p(t)| \leq \frac{\beta}{8\sqrt{\pi \alpha}} e^{-\frac{(T+t)^2}{4\alpha}} e^{\frac{\beta^2}{16\alpha}}$. It follows that
\begin{equation}
    \norm{p(t)}_{L^1} \leq \frac{\beta}{4} e^{\frac{2\beta^2}{\sigma^2}}. 
\end{equation}

To find $\Vert q(t) \Vert_{L^1}$, we split the $t$-integral into short times $|t|<t_0$ and long times $|t|>t_0$, with $t_0$ to be determined later:
\begin{equation}
    \Vert q(t) \Vert_{L^1} = \frac{1}{2\pi} \int_{-t_0}^{t_0} \rd t \, \left| \int_{-\infty}^{\infty} \rd \nu \, e^{i\nu t} \hat q(\nu) \right| +  \frac{1}{2\pi} \int_{|t|>t_0} \rd t \, \left| \int_{-\infty}^{\infty} \rd \nu \, e^{i\nu t} \hat q(\nu) \right|.
\end{equation}
Using partial integration, we obtain
\begin{equation}
    \Vert q(t) \Vert_{L^1} = \frac{2t_0}{\pi} \int_{0}^{\infty} \rd \nu \,|\hat q(\nu)| + \frac{2}{\pi t_0} \int_{0}^{\infty} \rd \nu \, \left|\frac{\rd^2 \hat q(\nu)}{\rd\nu^2}\right|.
\end{equation}
To obtain a bound for $|\hat q(\nu)|$, note that $\frac{1}{\left| e^{-\frac{1}{4}\nu^2 T_0^2} - e^{i\nu T} \right|} = \Theta(\nu^{-1})$. For a bound of $\left|\frac{\rd^2 \hat q(\nu)}{\rd\nu^2}\right|$, the numerator can be bounded by a term of order $\mathcal{O}(\nu^3 e^{-\alpha \nu^2})$ and thus cancels divergences in the denominator.
A tedious but straightforward calculation then yields, with the assumption $T_0 < T$ and $\sigma < T$,
\begin{equation}
    \int_{0}^{\infty} \rd \nu \, |\hat q(\nu)| = \mathcal{O}\left(\frac{T^2}{\sigma^2 T_0^2}\right), \qquad
    \int_{0}^{\infty} \rd \nu \, \left|\frac{d^2 \hat q(\nu)}{d\nu^2}\right| = \mathcal{O}\left(\frac{T^{10}}{\sigma^4 T_0^6}\right).
\end{equation}

Picking $t_0 = \frac{T^4}{T_0^2 \sigma}$, this gives
\begin{equation}\begin{split}
    \Vert q(t) \Vert_{L^1} = \mathcal{O} \left(\frac{T^6}{\sigma^3 T_0^4}\right),
    \label{eq_q1normbound}
\end{split}\end{equation}

and together with Young's convolution theorem  Eq.~\eqref{eq_h-1normbound} and Eq.~\eqref{eq_h+1normbound}, $\Vert \Omega_\mathrm{DB} \Vert_\infty$ is thus bounded by~[cf. Eq.~\eqref{eq_omegaDBinfnorm}]
\begin{equation}
    \Vert \Omega_\mathrm{DB} \Vert_\infty = \mathcal{O}\left(\frac{n_B \beta}{\sigma^3} \frac{T^6}{T_0^4} e^{\frac{9\beta^2}{4\sigma^2}}\right).
\end{equation}

\subsubsection{Bounding $\Omega_\mathrm{LS}$}
The procedure to bound $\Vert \Omega_\mathrm{LS} \Vert_\infty$ [Eq.~\eqref{eq:omegaLS}] is similar to before. The same change of variables is first used to obtain a separable expression $\hat g_+(\nu_+) \, \hat g_-(\nu_-)$,
\begin{equation}
    \Omega_\mathrm{LS} = \sum_a \sum_{\nu_1,\nu_2\in\mathfrak{B}} \frac{2i}{\pi \sigma^2} \underbrace{F(\nu_-) \int_{-\infty}^\infty \rd t_- \, e^{-i\nu_- t_-} e^{-\frac{4}{\sigma^2}t_-^2}}_{\hat g_-(\nu_-)} \underbrace{\int_{-\infty}^\infty \rd t_+ \, \text{sign}(t_+) \, e^{-i\nu_+ t_+} e^{-\frac{4}{\sigma^2}\left( t_+ +\frac{i\beta}{4} \right)^2}}_{\hat g_+(\nu_+)} (A_{\nu_2}^a)^\dagger A_{\nu_1}^a.
\end{equation}

Corollary A.1 of Ref.~\onlinecite{CKG23} is again used,
\begin{equation}
    \Vert \Omega_\mathrm{LS} \Vert_\infty \leq \frac{2n_B}{\pi \sigma^2} \int_{-\infty}^{\infty} \rd t_- \, |g_-(t_-)| \int_{-\infty}^{\infty} \rd t_+ \, |g_+(t_+)|.
\end{equation}

For $\int_{-\infty}^{\infty} \rd t_+ \, |g_+(t_+)|$, it can be integrated explicitly,
\begin{equation}
    \int_{-\infty}^{\infty} \rd t_+ \, |g_+(t_+)| = \frac{\sigma}{4\sqrt\pi} e^{\frac{\beta^2}{4\sigma^2}}.
\end{equation}

For $\int_{-\infty}^{\infty} \rd t_- \, |g_-(t_-)|$, we first integrate to find $g_-(t_-)$,
\begin{equation}
    \hat g_-(\nu) = i \sqrt\pi \sigma \frac{e^{i\nu T}}{e^{-\frac{1}{4}\nu^2 T_0^2} -  e^{i\nu T}} \sinh\left( \frac{\beta\nu}{2} \right) e^{-\frac{\nu^2 \sigma^2}{16}}.
\end{equation}

Picking $\hat g_-(\nu) = i \sqrt\pi \sigma \hat r(\nu) \hat q(\nu)$,
\begin{equation}\begin{split}
    \hat r(\nu) &= \frac{1}{\nu} \sinh \left( \frac{\beta \nu}{2} \right) e^{i\nu T - \alpha \nu^2}, \\
    \hat q(\nu) &= \frac{\nu e^{-\alpha \nu^2}}{e^{-\frac{1}{4}\nu^2 T_0^2} - e^{i\nu T}}.
\end{split}\end{equation}
$\Vert g_-(t) \Vert_{L^1}$ is bounded by the convolution theorem and convolution inequality.
\begin{equation}
    \Vert g_-(t) \Vert_{L^1} \leq \sqrt{\pi} \sigma \Vert r(t) \Vert_{L^1} \Vert q(t) \Vert_{L^1}.
\end{equation}
$\hat q(\nu)$ is the same as before, so $\Vert q(t) \Vert_{L^1}$ is bounded by Eq.~\eqref{eq_q1normbound}. $\hat r(\nu)$ can be rewritten using $\frac{1}{\nu} \sinh \left( \frac{\beta \nu}{2} \right) = \int_0^{\beta/2} \rd x \, \cosh(\nu x)$, so $r(t)$ can be evaluated by swapping the integrals,
\begin{equation}
    r(t) = \frac{1}{2\pi} \int_0^{\beta/2} \rd x \int_{-\infty}^{\infty} \rd \nu \, \cosh(\nu x) e^{i\nu (T+t) - \alpha \nu^2} = \frac{1}{2\sqrt{\pi\alpha}} e^{-\frac{(T+t)^2}{4\alpha}} \int_0^{\beta/2} \rd x \, e^{\frac{x^2}{4\alpha}} \cos \left( \frac{T+t}{2\alpha}x \right).
\end{equation}
Using $\left| \cos \left( \frac{T+t}{2\alpha}x \right) \right| < 1$, $|r(t)| < \frac{\beta}{4\sqrt{\pi\alpha}} e^{-\frac{(T+t)^2}{4\alpha}} e^{\frac{\beta^2}{16\alpha}}$. The bound for $\Vert r(t) \Vert_{L^1}$ is then given by
\begin{equation}
    \Vert r(t) \Vert_{L^1} \leq \frac{\beta}{4\sqrt{\pi\alpha}} e^{\frac{\beta^2}{16\alpha}} \int_{-\infty}^{\infty} \rd t \, e^{-\frac{(T+t)^2}{4\alpha}} \leq \frac{\beta}{2} e^{\frac{\beta^2}{16\alpha}} = \frac{\beta}{2} e^{\frac{2\beta^2}{\sigma^2}}.
\end{equation}

Putting these results together, the bound for $\Vert \Omega_\mathrm{LS} \Vert_\infty$ is of the same order as $\Vert \Omega_\mathrm{DB} \Vert_\infty$.
\begin{equation}
    \Vert \Omega_\mathrm{LS} \Vert_\infty = \mathcal{O} \left( \frac{n_B \beta}{\sigma^3} \frac{T^6}{T_0^4} e^{\frac{9\beta^2}{4\sigma^2}} \right).
\end{equation}

This completes the bound for $\Vert \sigma \Vert_1$,
\begin{equation} \label{eq:sigmaboundresult}
    \Vert \sigma \Vert_1 = \Vert \Omega \Vert_\infty \leq \Vert \Omega_\mathrm{LS} \Vert_\infty + \Vert \Omega_\mathrm{DB} \Vert_\infty \leq \mathcal{O} \left( \frac{n_B \beta}{\sigma^3} \frac{T^6}{T_0^4} e^{\frac{9\beta^2}{4\sigma^2}} \right).
\end{equation}

\subsection{Bounds on $\norm{\tilde \rho-\mathcal{K}_{\mathrm{LS}}[\tilde \rho]}_1$} \label{app:boundlast}

In this subsection, we derive a bound for $\norm{\tilde \rho-\mathcal{K}_{\mathrm{LS}}[\tilde \rho]}_1$, which is the last ingredient needed to obtain the fixed-point error bound in App.~\ref{sec:Derivationbigpicture}. 

Using the expansion $\tilde \rho = \rho_\beta + J^2 \sigma$ and $e^{J^2 \mathcal L_\mathrm{LS}} = \rho + J^2 \mathcal L_\mathrm{LS}[\rho] + \mathcal{O}\left(J^4 n_B^2 e^{\frac{\beta^2}{2\sigma^2}} \right)$, this gives at order $J^4$:
\begin{align}
    \norm{\tilde \rho-\mathcal{K}_{\mathrm{LS}}[\tilde \rho]}_1 &\leq J^4 \int_{-\infty}^{\infty} \rd x \, p(x) \norm{e^{-iH(T+x)}  \mathcal{L}_{\mathrm{LS}}[\sigma]  e^{iH(T+x)} }_1+\mathcal{O}\left(J^4 n_B^2 e^{\frac{\beta^2}{2\sigma^2}} \right).
\end{align}
Due to the unitary invariance of the norm, the expression can be simplified:
\begin{align}
    \norm{\tilde \rho-\mathcal{K}_{\mathrm{LS}}[\tilde \rho]}_1 \leq J^4  \norm{\mathcal{L}_{\mathrm{LS}}[\sigma]}_1 + \mathcal{O}\left(J^4 n_B^2 e^{\frac{\beta^2}{2\sigma^2}} \right).
\end{align}

The first term is determined by collecting results for $\norm{\mathcal{L}_{\mathrm{LS}}[\rho]}_1$ [Eq.~\eqref{eq:LLSnormbound}] and $\norm{\sigma}_1$ [Eq.~\eqref{eq:sigmaboundresult}].
\begin{equation}
    \norm{\mathcal{L}_{\mathrm{LS}}[\sigma]}_1 \leq 3 n_B e^{\frac{\beta^2}{4\sigma^2}} \norm{\sigma}_1 = \mathcal{O} \left( \frac{n_B^2 \beta}{\sigma^3} \frac{T^6}{T_0^4} e^{\frac{5\beta^2}{2\sigma^2}} \right).
\end{equation}

Thus the complete bound is given by
\begin{equation}
    \norm{\tilde \rho-\mathcal{K}_{\mathrm{LS}}[\tilde \rho]}_1 = \mathcal{O} \left( J^4 n_B^2 \left( 1 + \frac{ \beta}{\sigma^3} \frac{T^6}{T_0^4} \right) e^{\frac{5\beta^2}{2\sigma^2}} \right).
\end{equation}
\section{Bounds on the variance of random channels}\label{app:boundvariance}

In this section, we provide the proof for the variance with respect to the observable $O$ after $m$ steps.

Consider a random sequence $M=(x_1,\dots,x_m)$, which corresponds to application of the random channel
\begin{align}
    \mathcal{E}_M[\rho]=\mathcal{K}_{x_m}\circ \dots \mathcal{K}_{x_1}[\rho].
\end{align}
For an arbitrary initial state $\rho$, denote the expectation value of $O$ with respect to this random sequence as $\braket{O}_M=\Tr[O \mathcal{E}_M[\rho]]$.
Furthermore, the average over all random sequences is denoted by $\mathbb{E}_M$.

The variance of the observable $O$ can then be expressed as
\begin{align}
    \mathrm{Var}(O)=\mathbb{E}_M \braket{O^2}_M-\left(\mathbb{E}_M \braket{O}_M\right)^2=
    \mathbb{E}_M\left[\braket{O^2}_M-\braket{O}_M^2\right]+\mathbb{E}_M\left[\braket{O}_M^2-\left(\mathbb{E}_M \braket{O}_M\right)^2\right].
\end{align}
In the second equation, the variance is split into two terms.
The first term is the standard fluctuation term, and is bounded by
$\left|\mathbb{E}_M\left[\braket{O^2}_M-\braket{O}_M^2\right]\right|\leq \norm{O}_\infty^2$.
The second term denotes the additional variance due to the random evolution time of the channel.

To bound the second term, note first that
\begin{align}
    \mathbb{E}_M\left[\braket{O}_M^2-\left(\mathbb{E}_M \braket{O}_M\right)^2\right]=\mathbb{E}_M\left(\braket{O}_M- \mathbb{E}_M \braket{O}_M\right)^2, 
\end{align}
and $\braket{O}_M- \mathbb{E}_M \braket{O}_M$ can be expressed as a telescopic sum:
\begin{align}
    \left[\braket{O}_M- \mathbb{E}_M \braket{O}_M\right]=\sum_{j=1}^m D_j,
\end{align}
with
\begin{align}
   D_j=\Tr[\rho \mathcal{K}^\dagger_{x_1}\circ\dots \mathcal{K}^\dagger_{x_{j-1}}(\mathcal{K}^\dagger_{x_j}-\mathcal{K}^\dagger)\circ \mathcal{K}^{\dagger(m-j)}[O]].
\end{align}
Observe that for $i>j$, 
\begin{align}
    \mathbb{E}_M[D_i D_j]=0.
\end{align}
To see this, observe that the average over the $i$-th random number $x_i$ 
contains in this case a single factor $(\mathcal{K}^\dagger_{x_i}-\mathcal{K}^\dagger)$, which averages to zero.

Furthermore, due to the contractivity of channels, it holds
\begin{align}
    |D_j|\leq \norm{(\mathcal{K}^\dagger_{x_j}-\mathcal{K}^\dagger)\circ \mathcal{K}^{\dagger{(m-j)}}[O]}_\infty.
\end{align}
Note also that $[\mathcal{K}^\dagger_{x_j}-\mathcal{K}^\dagger] \mathbb{I}=0$ due to trace preservation. We can thus add a term $-\Tr[\rho_{\rm{fix}}O]\mathbb{I}$ to the expression:
This gives 
\begin{align}
\begin{split}
    \mathbb{E}_M\left[\braket{O}_M^2-\left(\mathbb{E}_M \braket{O}_M\right)^2\right]
    &\leq \sum_{j=1}^m \mathbb{E}_M |D_j|^2\leq  \mathbb{E}_M \norm{\mathcal{K}^\dagger_{x_j}-\mathcal{K}^\dagger}^2_{\infty}\sum_{j=1}^m \norm{\mathcal{K}^{\dagger (m-j)}[O-\Tr[\rho_{\rm{fix}}O]\mathbb{I}]}_\infty^2\\
    &\leq C^2 \, \mathbb{E}_M \norm{\mathcal{K}^\dagger_{x_j}-\mathcal{K}^\dagger}^2_\infty \norm{O}^2_\infty \frac{1-e^{-\frac {2 m}{\tau_{\rm{mix}}}}}{1-e^{-\frac{2}{\tau_{\rm{mix}}}}}.
\end{split}
\end{align}
Here we used $ \norm{\mathcal{K}^{\dagger (n)}[O-\Tr[\rho_{\rm{fix}}O]\mathbb{I}]}_\infty<C e^{-\frac{n}{\tau_{\rm{mix}}}}\norm{O}_\infty$. The prefactor $C$ depends on the specific structure of the channel, such as the size of possible Jordan blocks~\cite{Szehr_2014}. 
An upper bound is given by a factor of the mixing time itself. 
However, in many cases, for example, if the channel is contractive as at high temperatures~\cite{bakshi2025dobrushinconditionquantummarkov}, $C=\mathcal{O}(1)$.

It remains to bound $\mathbb{E}_M \norm{\mathcal{K}^\dagger_{x_j}-\mathcal{K}^\dagger}^2_\infty$. We have
\begin{align}
    \mathbb{E}_M \norm{\mathcal{K}^\dagger_{x_j}-\mathcal{K}^\dagger}_\infty\leq \int \rd x\, p(x) \norm{U(x_j+T)-U(x+T)}_{\infty}\leq T_0\frac{2}{\sqrt{\pi}}\norm{H}_\infty.
\end{align}
Combining all bounds, this gives
\begin{align}
    \mathrm{Var}(O)\leq  \norm{O}^2_\infty+ C^2 T_0^2\frac{4}{\pi }\norm{H}^2_\infty \norm{O}^2_\infty \frac{1-e^{\frac {2 m}{-\tau_{\rm{mix}}}}}{1-e^{-\frac{2}{\tau_{\rm{mix}}}}},
\end{align}
which is the result in Theorem~\ref{thrm:variance}.

\section{Numerical results on other randomization schemes}\label{app:morerandom}
Here, we present numerical results on the variance of the randomization scheme introduced in Refs.~\onlinecite{ding2025endtoendefficientquantumthermal,chen2026overcominglambshiftsystembath}. These works propose a randomization of the bath Hamiltonian, and a randomization of jump operators in order to derive a tractable bound for the mixing time of the channel in terms of the exact Gibbs sampler.

For a controlled comparison with our protocol~[Fig.~\ref{fig:variance_2qubit}], we only test the randomization of the bath Hamiltonian. We use the same protocol parameters $\beta = 1, T = 10, T_0 = 1, \sigma = 1$ and the same mixed-field Ising model for $H$ [Eq.~\eqref{eq:model}] with $n_S = 2, g = 0.9045, h = 0.809$. A randomized evolution time is also used here to suppress resonances, using the same probability distribution $p(x)$ from before [Eq.~\eqref{eq:px}].

The bath Hamiltonian is determined by a random parameter $\omega$ at every interaction cycle, and is given by
\begin{equation}\label{eq:randomHB}
    H_{B,\omega} = -\frac{\omega}{2} Z,
\end{equation}
where $\omega$ is drawn from the uniform distribution $g(\omega)$ on the interval $[0, \omega_\mathrm{max}]$, with $\omega_\mathrm{max}$ set just slightly larger than the greatest spectral gap of the system Hamiltonian $H$.
\begin{equation}
    g(\omega) = \frac{1}{\omega_\mathrm{max}} \mathbbm{1}_{\omega \in [0, \omega_\mathrm{max}]}.
\end{equation}
The remainder of the Hamiltonian is similar to before,
\begin{equation}
    H_{SB, \omega}(t) = H \otimes \mathbb{I}_B + \mathbb{I}_S \otimes H_{B,\omega} + J f(t) \sum_{a\in\mathcal A} \left( A_a \otimes B_a^\dagger + A_a^\dagger \otimes B_a \right),
\end{equation}
with the filter function $f(t)$ now chosen to be real,
\begin{equation}
    f(t) = \sqrt{\frac{2}{\pi \sigma^2}} e^{-\frac{2 t^2}{\sigma^2}}.
\end{equation}
The same jump operator $A_a = X_a + Z_a$ is used. 

At the start of every interaction cycle, the bath qubits are now initialized to the bath thermal state $\rho_B^0 \propto e^{-\beta H_{B,\omega}}$. Each step of the protocol then implements a channel of the form
\begin{align}
     \mathcal{K}_{x,\omega}[\rho]=\Tr_B \left[U_\omega \left(T+x \right)(\rho \otimes \rho_B^0) U_\omega^\dagger \left(T+x \right)\right],
\end{align}
where the propagator $U_\omega\left(t+\frac{T}{2}\right)=\mathcal{T}e^{-i \int_{-\frac{T}{2}}^t \rd s \, H_{SB}(s)}$.

The channel $\mathcal{K}[\rho]$ is then given by the average of $\mathcal{K}_{x,\omega}[\rho]$ over randomized evolution time shifts $x$ and also randomized bath Hamiltonians $\omega$,
\begin{align}
    \mathcal{K}[\rho]=\int_{-\infty}^{\infty} \rd x\, \int_{0}^{\omega_\mathrm{max}} \rd \omega\, p(x) g(\omega) \mathcal{K}_{x,\omega}[\rho],
\end{align}
whereas in practice, the following random channel is implemented,
\begin{align}
    \mathcal{E}_M[\rho]=\mathcal{K}_{x_M, \omega_M}\circ \mathcal{K}_{x_{M-1}, \omega_{M-1}}\circ \cdots \circ \mathcal{K}_{x_1, \omega_1}[\rho].
\end{align}

\begin{figure*}
    \centering
    \includegraphics[width=\linewidth]{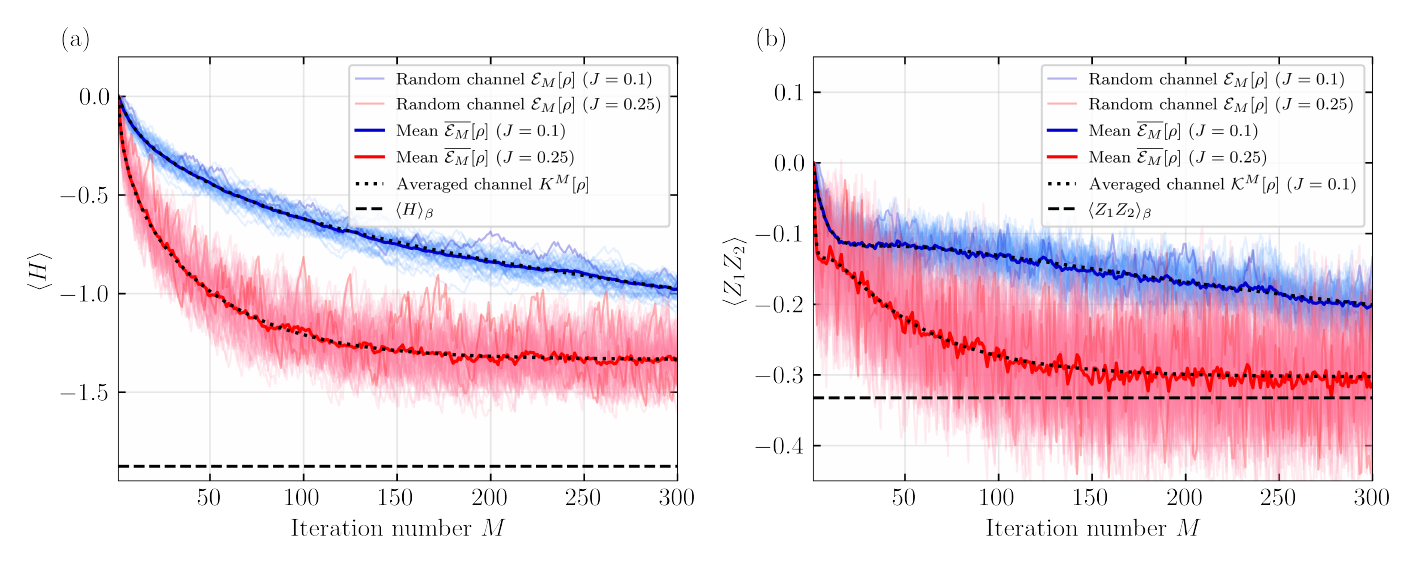}
    \caption{Evolution of thermal state preparation with a randomized bath [Eq.~\eqref{eq:randomHB}]. Results for 50 randomized channels $\mathcal{E}_M[\rho]$ (colored) and averaged channels $\mathcal{K}^M[\rho]$ (dotted) are shown for the mixed-field Ising model [Eq.~\eqref{eq:model}] with $n_S = 2,\ \beta = 1,\ \omega_\mathrm{max} = 3,\ J = 0.1$ (blue) or 0.25 (red), $T = 10$ (no resonance), initial state $\rho \propto \mathbb{I}$ (maximally mixed state). (a) Evolution of average energy $\langle H \rangle$. (b) Evolution of two-point correlator $\langle Z_1 Z_2 \rangle$. The final standard deviation is measured to be (a) $\sigma_{H} = 0.040 \ (J=0.1),\ 0.106\ (J=0.25)$, (b) $\sigma_{Z_1 Z_2} = 0.024\ (J=0.1),\ 0.060\ (J=0.25)$, which is greater than that of our protocol [Fig. \ref{fig:variance_2qubit}].}
    \label{fig:linlin6}
\end{figure*}

Since this channel now involves two random sequences of variables $x$ and $\omega$, it is naturally expected that this protocol has a larger variance. The numerical results are plotted in Fig.~\ref{fig:linlin6}, which indeed show that the fluctuations are more significant than our protocol [Fig. \ref{fig:variance_2qubit}], with $\sigma_{H}$ larger by a factor of 3--5 and $\sigma_{Z_1 Z_2}$ larger by a factor of 1.18--1.26.

\end{widetext}
\end{appendix}

\twocolumngrid

\bibliography{references_paper}

\end{document}

%% file: preamble.tex
\usepackage[utf8]{inputenc}
\usepackage{fullpage}
\usepackage{graphicx}
\usepackage{amsmath}
\usepackage{amsthm}
\usepackage{enumitem}
\usepackage{amssymb}
\usepackage{wasysym}
\usepackage{oplotsymbl}
\usepackage{bm} 
\usepackage{dsfont} 
\usepackage{mathrsfs} 
\usepackage{physics}
\usepackage{mathtools}
\usepackage{quantikz}
\usepackage{braket}
\usepackage[dvipsnames]
{xcolor}
\usepackage{bbm} 
\definecolor{darkblue}{rgb}{0,0,.65}
\definecolor{darkgreen}{rgb}{0.28,0.41,0.19}
\definecolor{nicegreen}{rgb}{0.28,0.85,0.19}

\usepackage[colorlinks=true,linkcolor=blue,allcolors=blue]{hyperref}
\usepackage[capitalize,nameinlink]{cleveref}

\usepackage{comment}
\usepackage[normalem]{ulem}

\def\equationautorefname~#1\null{Eq. (#1)\null}

\newcommand{\appref}[1]{\hyperref[#1]{App.~\ref*{#1}}}

\newcommand{\e}{\mathrm{e}}

\newcommand{\rd}{\mathrm{d}}

\newtheorem{theorem}{Theorem}
\newtheorem{lemma}{Lemma}